\documentclass[11pt]{article}
\usepackage{ifpdf}
\usepackage[draft,uselot,hylinks,titlepage,full]{myarticle}

\usepackage{amsmath}
\usepackage{amssymb}
\usepackage{amsthm}
\usepackage{xspace}
\usepackage{color}
%\usepackage[dvips,pdfstartview=FitH,pdfpagemode=None,colorlinks=true,citecolor=blue,linkcolor=red]{hyperref}
%\input{abbreviations}
%%%%%%%%%%%%%%%%%%%%%%%%%%%%%%%%%%%%%%%%%%%%%%%%%%%%%%%%%%
%             MACROS
%%%%%%%%%%%%%%%%%%%%%%%%%%%%%%%%%%%%%%%%%%%%%%%%%%%%%%%%%%
\newcommand{\dual}[1]{{{#1}^\perp}}
\newcommand{\dualk}[2]{{{#1}^\perp_{#2}}}

\newcommand{\ip}[1]{\langle #1 \rangle}

\DeclareMathOperator{\supp}{{\mathrm supp}}
\DeclareMathOperator{\wt}{{\mathrm wt}}
\DeclareMathOperator{\rate}{{\mathrm rate}}

\newcommand{\dist}[1]{\Delta \left({#1}\right)}

%\newcommand{\acc}{\mbox{{\sf accept\ }}}
%\xspace
%\newcommand{\rej}{\mbox{{\sf reject\ }}}

\renewcommand{\F}{{\mathbf{F}}}
%
%\newcommand{\our}{\text{T^*}}
%\newcommand{\robour}{\text{our}}
%
%\newcommand{\rowcolrect}{\mathbf{D_C}\text{-distribution Tester}}
%\newcommand{\robrowcolrect}{\text{$\mathbf{D_C}$}}

%%%%%%%%%%%%%%%%%%%%%%%%%%%%%%%%%%%%%%%%%%%%%%%%%%%%%
%%% PART: Theorem Environments and \ref-like macros
%%%%%%%%%%%%%%%%%%%%%%%%%%%%%%%%%%%%%%%%%%%%%%%%%%%%%

% Generic hyperref construct
\newcommand{\expref}[2]{\hyperref[#2]{#1~\ref{#2}}}

% Hyperref variants
\newcommand{\secref}[1]{\hyperref[#1]{Section~\ref{#1}}}
\newcommand{\subsecref}[1]{\hyperref[#1]{Section~\ref{#1}}}

% Theorems, lemmata, corollaries, etc. are interleaved
% and numbered by section.
%\theoremstyle{plain}
%\newtheorem{theorem}{Theorem}[section]
\newcommand{\thmref}[1]{\hyperref[#1]{Theorem~\ref*{#1}}}
\newcommand{\lemref}[1]{\hyperref[#1]{Lemma~\ref*{#1}}}
\newcommand{\corref}[1]{\hyperref[#1]{Corollary~\ref*{#1}}}
\newcommand{\clmref}[1]{\hyperref[#1]{Claim~\ref*{#1}}}
\newcommand{\propref}[1]{\hyperref[#1]{Proposition~\ref*{#1}}}

% More theorem like environments
%\newtheorem{acknowledgement}[theorem]{Acknowledgement}
%\newcommand{\ackref}[1]{\hyperref[#1]{Acknowledgement~\ref*{#1}}}

%\newtheorem{algorithm}[theorem]{Algorithm}
\newcommand{\algoref}[1]{\hyperref[#1]{Algorithm~\ref*{#1}}}

%\newtheorem{axiom}[theorem]{Axiom}
%\newcommand{\axmref}[1]{\hyperref[#1]{Axiom~\ref*{#1}}}

%\newtheorem{case}[theorem]{Case}
%\newcommand{\caseref}[1]{\hyperref[#1]{Case~\ref*{#1}}}
%
%\newtheorem{conclusion}[theorem]{Conclusion}
%\newcommand{\concref}[1]{\hyperref[#1]{Conclusion~\ref*{#1}}}
%
%\newtheorem{condition}[theorem]{Condition}
%\newcommand{\condref}[1]{\hyperref[#1]{Condition~\ref*{#1}}}
%
%\newtheorem{criterion}[theorem]{Criterion}
%\newcommand{\critref}[1]{\hyperref[#1]{Criterion~\ref*{#1}}}
%
%\newtheorem{exercise}[theorem]{Exercise}
%\newcommand{\exerref}[1]{\hyperref[#1]{Exercise~\ref*{#1}}}
%
%\newtheorem{notation}[theorem]{Notation}
%\newcommand{\notnref}[1]{\hyperref[#1]{Notation~\ref*{#1}}}
%
%\newtheorem{problem}[theorem]{Problem}
%\newcommand{\probref}[1]{\hyperref[#1]{Problem~\ref*{#1}}}
%
%\newtheorem{solution}[theorem]{Solution}
%\newcommand{\solnref}[1]{\hyperref[#1]{Solution~\ref*{#1}}}
%\let\solref\solnref  % Alternate spelling
%
%\newtheorem{summary}[theorem]{Summary}
%\newcommand{\summref}[1]{\hyperref[#1]{Summary~\ref*{#1}}}
%\let\sumref\summref % Alternate spelling
%
%\newcommand{\figref}[1]{\hyperref[#1]{Figure~\ref*{#1}}}

% Definition style theorem like environments
\theoremstyle{definition}
\newtheorem{fact}[theorem]{Fact}
\newcommand{\conjref}[1]{\hyperref[#1]{Conjecture~\ref*{#1}}}

\newcommand{\defnref}[1]{\hyperref[#1]{Definition~\ref*{#1}}}
\let\defref\defnref % Alternate spelling

\newcommand{\egref}[1]{\hyperref[#1]{Example~\ref*{#1}}}

\newcommand{\remref}[1]{\hyperref[#1]{Remark~\ref*{#1}}}

% Set equation numbering to be of form Section.Number.
\numberwithin{equation}{section}

% Predefined common number systems.
%\newcommand{\C}{{\mathbb C}}
%\newcommand{\R}{{\mathbb R}}
%\newcommand{\Q}{{\mathbb Q}}
%\newcommand{\Z}{{\mathbb Z}}
%\newcommand{\N}{{\mathbb N}}

\newcommand{\ie}{i.\,e.}
\newcommand{\eg}{e.\,g.}

\begin{document}

\title{A Combination of Testability and Decodability by Tensor Products}

\author{Michael Viderman\thanks{The research was partially supported by the European Community's Seventh Framework Programme (FP7/2007-2013) under grant agreement number 240258 and by
grant number 2006104 by the US-Israel Binational Science Foundation.} \\ Computer Science Department \\
Technion --- Israel Institute of Technology \\
Haifa 32000, Israel \\
{\tt viderman@cs.technion.ac.il}}

\maketitle

\begin{abstract}
Ben-Sasson and Sudan (RSA 2006) showed that repeated tensor products of linear codes with a very large distance are locally testable. Due to the requirement of a very large distance the associated tensor products could be applied only over sufficiently large fields. Then Meir (SICOMP 2009) used this result (as a black box) to present a combinatorial construction of locally testable codes that match best known parameters. As a consequence, this construction was obtained over sufficiently large fields.

In this paper we improve the result of Ben-Sasson and Sudan and show that for \emph{any} linear codes the associated tensor products are locally testable. Consequently, the construction of Meir can be taken over any field, including the binary field.

Moreover, a combination of our result with the result of Spielman (IEEE IT, 1996) implies a construction of linear codes (over any field) that combine the following properties:
\begin{itemize}
\item have constant rate and constant relative distance;
\item have blocklength $n$ and testable with $n^{\epsilon}$ queries, for any constant $\epsilon > 0$;
\item linear time encodable and linear-time decodable from a constant fraction of errors.
\end{itemize}

Furthermore, a combination of our result with the result of Guruswami et al. (STOC 2009) implies a similar corollary regarding the list-decodable codes.
\end{abstract}

\section{Introduction}\label{sec:intro}
Locally testable codes (LTCs) are error correcting codes that have a tester, which is a randomized
algorithm with oracle access to the received word $x$. The tester reads a sublinear amount of information from $x$
and based on this "local view" decides if $x \in C$ or not. It should accept codewords with probability one, and
reject words that are far (in Hamming distance) from the code with noticeable probability.

Such codes are of interest in computer science due to their numerous connections
to probabilistically checkable proofs (PCPs) and property testing (see the surveys \cite{Trevisan, Goldreich} for more information). By now several different constructions of LTCs are known including codes based on
low-degree polynomials over finite fields \cite{BLR93,ALMSS98}, constructions based on PCPs of proximity/assignment testers \cite{BGHSV04,DR06} and sparse random linear codes \cite{BV10,KS07, KopSar10}. In this paper we study a different family of LTC constructions, namely, \emph{tensor codes}. Given two linear error correcting codes $C \subseteq \F^{n_1}, R \subseteq \F^{n_2}$ over a finite field $\F$, we define
their \emph{tensor product} to be the subspace $R \otimes C \subseteq \F^{n_1\times n_2}$ consisting of $n_1\times n_2$ matrices $M$ with entries in $\F$ having the property that every row of $M$ is a codeword of $R$ and every
column of $M$ is a codeword of $C$. If $C=R$ we use $C^2$ to denote $C\otimes C$ and for $i>2$ define $C^i=C\otimes C^{i-1}$. Note that the blocklength of $C^i$ is $n_1^i$.

Recently, tensor products were used to construct new families of LTCs \cite{BS06,meir}, new families of list-decodable codes \cite{GopGR09}, to give an alternative proof (see \cite{Meir10}) for IP=PSPACE theorem of \cite{Shamir92,Shen92} etc.

Ben-Sasson and Sudan \cite{BS06} suggested to use tensor product codes as a means to construct LTCs combinatorially.
A natural hope would be to expect that given a code $C \subseteq \F^{n_1}$, whenever the task is to test whether a word $M \in \F^{n_1^2}$ is in $C^2$, the tester for $C^2$ can choose a random row (or column) of $M$; and if $M$ was far from $C^2$, its typical row/column is far from $C$ and hence can be tested on being in $C$. As was shown in \cite{nonrobust,nonrobustthree,nonrobusttwo} this approach fails in general and is known to work only for the base codes having some non-trivial properties \cite{DSW06,BV09,weaklysmooth}.

Nevertheless, Ben-Sasson and Sudan \cite{BS06} showed that taking the repeated tensor products of any code $C\subseteq \F^n$ with sufficiently large distance results in a locally testable code with sublinear query complexity. More formally, they showed \cite[Theorem 2.6]{BS06} that for every $m \geq 3$ if $\left(\frac{\dist{C}-1}{n}\right)^m \geq \frac{7}{8}$ then $C^m$ is locally testable using $n^2$ queries. Note that the blocklength of $C^m$ is $n^m$ and query complexity is $n^2$. Hence, for example, if $m = 10$ we obtain a code with blocklength $N = n^{10}$ and query complexity $N^{0.2}=n^2$, under assumption that $\left(\frac{\dist{C}-1}{n}\right)^{10} \geq \frac{7}{8}$.

Let us explain some issues that remained open. First of all, it was remained unclear if the assumption about a very large distance of the base codes is necessary. Moreover, the requirement on the distance of the base code is dependent on the number of tensor products ($m$) one should apply. Note that less query complexity (relatively to the blocklength) one should get more tensor product operations should be applied. Thus the requirement about the distance of the base code is increased when the number of queries one should get is decreased. We notice also that the larger distance implies the larger underlying field. It follows that this result can not provide (via tensor products) arbitrary low sublinear query complexity ($N^{\epsilon}$ for every constant $\epsilon > 0$) over a fixed field.

In this paper we ask the following question: is it possible to achieve a similar result to \cite{BS06} but with no requirements about the base codes at all. A positive result to this question might seem surprising since it would imply  that \emph{any} linear error-correcting code can be involved in the construction of LTCs via tensor products.

We give a positive answer on this question and show that no assumptions about the base codes (or underlying fields) are not needed. Our result does not make any assumptions about the base codes, and in particular we do not assume that the base codes involved in tensor products have very large distance and hence it holds over any fields. This contrasts with previous works on the combinatorial constructions of LTCs due to Ben-Sasson and Sudan \cite{BS06} and Meir \cite{meir} which required very large base-code distance implying large field size. The constructions of best known LTCs \cite{BS05, D07,meir} were obtained over the large fields (when finally, the field size can be decreased through code concatenation). Our improvement of \cite{BS06} implies that the construction of Meir \cite{meir} (which achieves LTCs of best known parameters) can be taken directly over any field (including the binary field). We think that this improvement has a non-negligible role since the LTCs construction of Meir is combinatorial and the combinatorial constructions of LTCs (or PCPs) should be independent, as much as possible, of the algebraic terms such as ``polynomials'', ``field size'', ``extension field'' etc.

Given the fact that error-correcting codes play an important role in a complexity theory, and in particular, in different iterative protocols, it might be helpful to develop a general scheme for constructing the error-correcting codes that combine several different properties. E.g., it might be helpful to have a high-rate codes which combine such properties as local testing, efficient encoding and decoding from a constant fraction of errors.

We show that a combination of our results with the results of \cite{Spielman,GopGR09} implies the construction of high-rate codes which are both testable with sublinear query complexity, linear-time encodable and efficiently decodable (or list-decodable) from the constant fraction of errors.

\paragraph{Organization of the paper.} In the following section we provide background regarding tensor codes and locally testable codes. In Section \ref{sec:main-results} we state our main results. We prove our
main theorem (\thmref{thm:main}) in Section \ref{sec:proofthmmain}. Finally,
in Section \ref{sec:aux} we prove our auxiliary statements.

\section{Preliminaries}\label{sec:definitions}

Throughout this paper, $\F$ is a finite field, $[n]$ denotes the set $\set{1,\ldots,n}$ and $\F^n$ denotes $\F^{[n]}$. All codes discussed in this paper will be a linear. Let $C\subseteq \F^n$ be a linear code over $\F$.

For $w \in \F^n$ let $\supp(w)=\{i|w_i \neq 0\}$, $|w|=|\supp(w)|$ and $\wt(w)=\frac{|w|}{n}$. We define the \emph{distance} between two words $x,y \in \F^n$ to be $\dist{x,y} = |\{i\ |\ x_i \neq y_i\}|$ and the relative
distance to be $\delta(x,y) = \frac{\dist{x,y}}{n}$. The distance of a code is defined by $\dist{C} = \min_{x\neq y \in C} \dist{x,y}$ and its the relative distance is denoted $\delta(C)=\frac{\dist{C}}{n}$. A $[n,k,d]_\F$-code is a $k$-dimensional subspace $C \subseteq \F^n$ of distance $d$. The rate of the code $C$ is defined by $\rate(C) = \frac{\dim(C)}{n}$. For $x \in \F^n$ and $C \subseteq \F^n$, let $\displaystyle \delta(x, C) = \delta_C(x)=\min_{y\in C}\left\{\delta(x,y) \right\}$ to denote the relative distance of $x$ from the code $C$. We note that $\displaystyle \dist{C} = \min_{c\in C \setminus \set{0}} \left\{ \wt(c) \right\}$. If $\delta(x, C) \geq \epsilon$ we say that $x$ is $\epsilon$-far from $C$ and otherwise $x$ is $\epsilon$-close to $C$. We let $\dim(C)$ denote the dimension of $C$. The vector inner product between $u=(u_1,u_2,\ldots,u_n)\in \F^n$ and $v=(v_1,v_2,\ldots,v_n)\in \F^n$ is defined to be
$\ip{u,v}=\sum_{i\in[n]} u_i \cdot v_i$. We let $\displaystyle \dual{C}=\set{u \in \F^n\ |\ \forall c \in C:\ \ip{u,c}=0}$ be the dual code of $C$ and $\displaystyle \dualk{C}{t} = \set{u \in \dual{C} \ |\ |u|=t}$. In a
similar way we define $\dualk{C}{\leq t} = \left\{u\in \dual{C} \ |\ |u|\leq t \right\}$. For $t\in \F^n$ and $T \subseteq \F^n$ we say that $t \perp T$ if $\ip{t, t'}=0$ for all $t' \in T$.

For $w \in F^n$ and $S = \left\{j_1,j_2,\dots,j_m \right\} \subseteq [n]$, where $j_1 < j_2 < \ldots < j_m$, we let $\displaystyle w|_{S} = (w_{j_1},\dots,w_{j_m})$ be the \emph{restriction} of $w$ to the subset $S$.
We let $\displaystyle C|_{S} = \set{c|_S\ |\ c\in C}$ denote the restriction of the code $C$ to the subset $S$.

\subsection{Tensor Product Codes}
The definitions appearing here are standard in the literature on tensor-based LTCs (\eg ~\cite{DSW06, BS06, meir, weaklysmooth, nonrobust}).

For $x\in \F^I$ and $y \in \F^J$ we let $x\otimes y$ denote the tensor product of $x$ and $y$ (\ie, the matrix $M$ with entries $M_{(i,j)} = x_i \cdot y_j$ where $(i,j) \in I \times J$). Let $R \subseteq \F^I$ and $C
\subseteq \F^J$ be linear codes. We define the tensor product code $R \otimes C$ to be the linear space spanned by words $r \otimes c \in \F^{I \times J}$ for $r \in R$ and $c \in C$. Some known facts regarding the tensor products (see \eg, \cite{DSW06}):
\begin{itemize}
\item The code $R \otimes C$ consists of all $I \times J$ matrices over $\F$ whose rows belong to $R$ and whose columns belong to $C$.

\item $\dim(R \otimes C)= \dim(R)\cdot \dim(C)$

\item $\rate(R \otimes C) = \rate(R) \cdot \rate(C)$

\item $\delta(R \otimes C)=\delta(R)\cdot\delta(C)$
\end{itemize}
We let $C^1 = C$ and $C^t = C^{t-1} \otimes C$ for $t > 1$. Note by this definition, $C^{2^0} = C$ and $C^{2^t} = C^{2^{t-1}} \otimes C^{2^{t-1}}$ for $t > 0$. We also notice that for a code $C \subseteq \F^n$ and $m \geq 1$ it holds that $\rate(C^m) = (\rate(C))^m$, $\delta(C^m) = (\delta(C))^m$ and the blocklength of $C^m$ is $n^m$.

The main drawback of the tensor product operation is that this operation strongly decreases the rate and the distance of the base codes. We refer the reader to \cite{meir} which showed how to use tensor products and avoid
the decrease in the distance and the strong decrease in the rate.

\subsection{Locally testable codes and Robustly Testable Codes}
A \emph{standard $q$-query tester} for a linear code $C \subseteq \F^n$ is a randomized algorithm that on the input word $w\in \F^n$ picks non-adaptively a subset $I \subseteq [n]$ such that $|I| \leq q$. Then $T$ reads all symbols of $w|_I$ and accepts if $w|_I \in C|_I$, and rejects otherwise (see \cite[Theorem 2]{3CNF}). Hence a $q$-query tester can be associated with a distribution over subsets $I \subseteq [n]$ such that $|I| \leq q$.

For purposes of composition we want to define a generalized tester (\defref{def:tester}) which does not make queries, but selects and returns a ``view'' (a subset $I \subseteq [n]$) which can be considered as a code by itself ($C|_I$).

\begin{definition}[Tester of $C$ and Test View]\label{def:tester}\label{def:locview}
A \emph{$q$-query tester} $\mathbf{D}$ is a distribution $\mathbf{D}$ over subsets $I \subseteq [n]$ such that $|I| \leq q$. Let $w \in \F^n$ (think of the task of testing whether $w \in C$) and let $I \subseteq [n]$ be a subset.
We call $w|_I$ the \emph{view} of a tester. If $w|_I \in C|_I$ we say that this view is \emph{consistent} with $C$, or when $C$ is clear from the context we simply say $w|_I$ is \emph{consistent}.
\end{definition}

When considering a tensor code $C^m \subseteq \F^{n^m}$, an associated tester will be a distribution over subsets $I \subseteq [n]^m$. Although the tester does not output $\acc$ or $\rej$, the way a
standard tester does, it can be converted to output $\acc, \rej$ as follows. Whenever the task is to test whether $w \in C$ and a subset $I \subseteq [n]$ is selected by the tester, the tester can output $\acc$ if $w|_{I}
\in C|_{I}$ and otherwise output $\rej$.

\begin{definition}[LTCs and strong LTCs]\label{def:lts}
A code $C\subseteq \F^n$ is a $(q,\epsilon,\delta)$-LTC if it has a $q$-query tester $\mathbf{D}$ such that for all $w\in \F^n$, if $\delta(w,C)\geq \delta$ we have $\displaystyle \pr{I \sim \mathbf{D}}{w|_{I} \notin
C|_{I}}\geq \epsilon$.

A code $C\subseteq \F^n$ is a $(q,\epsilon)$-strong LTC if it has a $q$-query tester $\mathbf{D}$ such that for all $w\in \F^n$, we have $\displaystyle \pr{I \sim \mathbf{D}}{w|_{I} \notin C|_{I}}\geq \epsilon\cdot
\delta(w,C)$.
\end{definition}

We notice that a $(q,\epsilon)$-strong LTC is a $(q,\epsilon \delta,\delta)$-LTC for every $\delta > 0$. Note that given a code $C \subseteq \F^n$, the subset $I\subseteq [n]$ uniquely defines $C|_I$. Moreover, the linearity of $C$ implies that $C|_I$ is a linear subspace of $\F^I$. In the rest of this section we formally
define the notion of \emph{robustness} (\defref{def:robust}) as was introduced in \cite{BS06}. To do that we start from the definition of \emph{local distance} (\defref{def:locdist}), which will be used in \defref{def:robust} and later in our proofs.

\begin{definition}[Local distance]\label{def:locdist}
Let $C$ be a code and $w|_I$ be the view on the coordinate set $I$ obtained from the word $w$. The \emph{local distance} of $w$ from $C$ with respect to $I$ (also called the $I$-distance of $w$ from $C$) is $\displaystyle \dist{w|_I,
C|_I} = \min_{c \in C} \left\{ \dist{w|_I,c|_I} \right\}$ and similarly the \emph{relative local distance} of $w$ from $C$ with respect to $I$ (relative $I$-distance of $w$ from $C$) is $\displaystyle \delta(w|_I, C|_I)
=\min_{c \in C} \left\{ \delta(w|_I,c|_I) \right\}$.
\end{definition}

Informally, robustness implies that if a word is far from the code then, on average, a test's view is far from any consistent view that can be accepted on the same coordinate set $I$. This notion was defined for LTCs
following an analogous definition for PCPs \cite{BGHSV04, D07}. We are ready to provide a general definition of robustness.

\begin{definition}[Robustness]\label{def:robust}
Given a tester (\ie, a distribution) $\mathbf{D}$ for the code $C \subseteq \F^n$, we let \[\rho^{\mathbf{D}}(w)=\ex{I \sim \mathbf{D}}{\delta(w|_I,C|_I)} \text{\ \ be the expected relative local distance of input $w$.}\] We
say that the tester $\mathbf{D}$ has robustness $\rho^{\mathbf{D}}(C)$ on the code $C$ if for every $w\in \F^n$ it holds that $\rho^{\mathbf{D}}(w) \geq \rho^{\mathbf{D}}(C) \cdot \delta_{C}(w)$.

Let $\set{C_n}_{n}$ be a family of codes where $C_n$ is of blocklength $n$ and $\mathbf{D_n}$ is a tester for $C_n$. A family of codes $\set{C_n}_{n}$ is {\em robustly testable} with respect to testers
$\{\mathbf{D_n}\}_{n}$ if there exists a constant $\alpha > 0$ such that for all $n$ we have $\rho^{\mathbf{D_n}}(C_n) \geq \alpha$.
\end{definition}

\section{Main Results}\label{sec:main-results}
The tester we consider in this paper is the plane tester (suggested in \cite{BS06}).

\begin{definition}[Plane Tester]\label{def:planes}
Let $m \geq 3$. Let $M \in \F^{n^m}$ be an input word and think of testing whether $M \in C^{n^m}$. The plane tester $\D$ picks (non-adaptively) a random $b\in [3]$ and random $i \in [n]$, and returns $(b,i)$-plane (the corresponding local view is $M|_{(b,i)}$).
Note that if $M$ is a candidate word to be in $C^m$ then $M|_{(b,i)}$ is a candidate word to be in $C^{m-1}$.
\end{definition}

Now we state our main technical theorem which says that the tensor product of any base code (with constant relative distance) is robustly testable. This extends the result of \cite{BS06} which showed that this claim holds for base codes with a very large distance.

\begin{theorem}[Main Theorem]\label{thm:main}
Let $C \subseteq \F^n$ be a linear code and $m \geq 3$. Let $\D$ be the plane tester for $C^m$. Then
\[\rho^{\D}(C^m) \geq \frac{(\delta(C))^{m}}{2m^2}.\]
\end{theorem}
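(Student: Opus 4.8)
The plan is to induct on $m$, the number of tensor powers, and to reduce the analysis of the plane tester for $C^m$ to the behavior of a two-dimensional tensor product $C' \otimes C''$ for suitably chosen codes $C', C''$ built from fewer tensor powers of $C$. The base case $m=3$ will require a direct argument about the robustness of the plane tester on $C^3$; here I would think of $C^3 = C \otimes C \otimes C$ and a plane as fixing one of the three coordinate directions. The key conceptual tool is that a word $M \in \F^{n^m}$ far from $C^m$ must, for at least one of the three directions $b \in [3]$, have many of its $(b,i)$-planes far (in the relative local distance sense) from $C^{m-1}$; then one applies the inductive robustness bound to each such plane. Quantitatively, if $M$ is $\delta_C(M)$-far from $C^m$, I want to show that the expected relative local distance $\rho^{\D}(M) = \ex{(b,i)}{\delta(M|_{(b,i)}, C^{m-1}|_{(b,i)})}$ is at least a constant (depending on $m$ and $\delta(C)$) times $\delta_C(M)$.

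First I would set up the two-dimensional reduction: view $\F^{n^m}$ as $\F^{n \times n^{m-1}}$ by singling out one coordinate, so $C^m = C \otimes C^{m-1}$, and recall that a word far from $C \otimes C^{m-1}$ has either many rows far from $C^{m-1}$ or many columns far from $C$ — this is the standard ``distance amplification / row-column'' structure of tensor codes, which I would make precise using \defref{def:locdist} and the fact (stated in the excerpt) that $\delta(R \otimes C) = \delta(R)\delta(C)$. The essential lemma to establish (which I expect the paper isolates as an auxiliary statement) is a \emph{robustness lemma for a single tensor step}: if $M$ is $\tau$-far from $C' \otimes C''$, then a random row is, in expectation, $\Omega(\tau)$-far from $C''$ up to losing a factor that accounts for $\delta(C')$. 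Combining this single-step bound recursively over the $m$ directions, with the plane tester choosing a uniformly random direction $b \in [3]$ and a uniformly random index $i$, should yield a recursion of the shape $\rho_m \geq c \cdot \delta(C) \cdot \rho_{m-1}$ or, after accounting more carefully for how the three directions share the ``blame,'' a bound of the form $\rho^{\D}(C^m) \geq \frac{(\delta(C))^m}{2m^2}$ once the constants from the base case and the per-level loss are tracked.

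The main obstacle, I expect, is exactly the step where one goes from ``$M$ is far from $C^m$'' to ``a typical plane of $M$ is far from $C^{m-1}$'' \emph{without} any largeness assumption on $\delta(C)$ — this is precisely the point where the naive row/column tester is known to fail (the counterexamples of \cite{nonrobust,nonrobustthree,nonrobusttwo}), so the argument cannot be a black-box ``pick a random row'' reduction. The resolution must exploit that the plane tester looks at a genuinely $(m-1)$-dimensional object rather than a single line, and that three directions are averaged over; I would look for a combinatorial/counting argument (in the spirit of, but strengthening, \cite{BS06}) showing that the set of coordinates on which $M$ disagrees with its best codeword approximation cannot be simultaneously ``hidden'' from the planes in all three directions. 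Carefully bounding the loss incurred at each of the $m$ levels — and in particular getting the polynomial $m^2$ rather than an exponential-in-$m$ dependence — is where the $\frac{1}{2m^2}$ factor comes from, and I would budget most of the technical work of the proof for making that bookkeeping tight, deferring the cleanest version of the single-step lemma to \secref{sec:aux} as the excerpt's organization suggests.
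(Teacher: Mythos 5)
Your proposal identifies the right obstacle but does not supply the idea that overcomes it, and the inductive architecture you propose is not the one the paper uses, nor can it work as stated. The single-step lemma you want --- ``if $M$ is $\tau$-far from $C'\otimes C''$ then a random row is in expectation $\Omega(\tau)$-far from $C''$'' --- is exactly the robustness of the row/column tester for two-wise tensor products, and it is \emph{false} for general linear codes; this is the content of the counterexamples you yourself cite. So a recursion of the form $\rho_m \geq c\cdot\delta(C)\cdot\rho_{m-1}$ built on that lemma has neither a valid base case nor a valid inductive step, and saying ``I would look for a combinatorial/counting argument showing the disagreements cannot be hidden from all three directions'' restates the theorem rather than proving it. The entire technical content of the paper lives in precisely that missing step.

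What the paper actually does is a direct, non-inductive argument for each $m\geq 3$. Each plane $pl$ is assigned an ``opinion'' $r(pl)$, the closest codeword of $C^{m-1}$ to $M|_{pl}$, and one forms the inconsistency matrix $E$ marking the points where two intersecting planes' opinions disagree, together with the set of points where all opinions agree but differ from $M$. A short counting argument (\propref{prop:boundrob}) lower-bounds $\rho^{\D}(M)$ by $\frac{\wt(E)}{m}$ plus the fraction of the latter points. The crux is \lemref{lem:main}: if two planes disagree at a point $p$, they disagree on an entire axis-parallel line through $p$ (this is where $m\geq 3$ enters), hence on at least $\delta(C)\cdot n$ points of that line; a pigeonhole over the third-direction planes through those points forces one of the two planes to accumulate at least $\frac{1}{2}(\delta(C)n)^{m-1}$ inconsistencies, i.e.\ to be heavy. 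Consequently either $\wt(E)$ is large (and the robustness bound is immediate), or all inconsistencies are confined to few heavy planes (\corref{cor:mainlem}), which can be deleted to leave a consistent submatrix extending uniquely to a codeword of $C^m$ close to $M$ (\propref{prop:decod}); either way the bound follows. None of this appears in your outline: you have no analogue of the opinions $r(pl)$, of the inconsistency matrix, of the heavy-plane dichotomy, or of the self-correction step, and these are exactly what replaces the false row-robustness lemma. A further small point: the $2m^2$ does not come from tracking losses across $m$ levels of a recursion; both factors of $m$ arise within a single level, from each point (respectively, each inconsistency) lying on $m$ axis-parallel planes.
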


The proof of Theorem \ref{thm:main} is postponed to Section \ref{sec:proofthmmain}.
Theorem \ref{thm:main} extends the main result of Ben-Sasson and Sudan \cite{BS06} since it implies that the $m$-wise  tensor product of linear codes is robust for any linear base codes with constant relative distance. In particular, the tensor product can be applied over any field, including the binary field. So, as explained in the introduction, the combinatorial construction of LTCs in \cite{meir} can be taken over any field (regardless of the field size).

Ben-Sasson and Sudan \cite{BS06} explained that plane testers can be composed and the robustness of the plane testers
implies the strong local testability. For the sake of completeness we state this claim formally in \corref{cor:test}, and provide a proof-sketch in Section \ref{sec:aux} (see \cite{BS06, BV09} for more information about composition of the testers).

\begin{corollary}\label{cor:test}
Let $C \subseteq \F^n$ be a linear code and $m \geq 3$ is a constant. Then
$C^m$ is a $(n^2, \alpha_m)$-strong LTC, where $\alpha_m > 0$ is a constant that depends only on $m$ and $\delta(C)$.
Note that the blocklength of $C^m$ is $n^m$.
\end{corollary}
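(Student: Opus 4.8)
The plan is to derive the corollary from \thmref{thm:main} by composing the plane testers, exactly as suggested in \cite{BS06}. I will define, by recursion on $j \ge 2$, a tester $T_j$ for $C^j$: the tester $T_2$ reads the entire word, i.e.\ its view is the full coordinate set $I = [n]^2$ (so $C^2|_I = C^2$ and it rejects precisely the words outside $C^2$); and for $j \ge 3$, the tester $T_j$ first runs the plane tester $\D_j$ for $C^j$ to obtain a plane coordinate set $P \subseteq [n]^j$ (of size $n^{j-1}$), and then runs $T_{j-1}$ on the induced word $w|_P$, which — via the identification $C^j|_P = C^{j-1}$ recalled in \defref{def:planes} — is a candidate for $C^{j-1}$, finally returning, as a subset of $[n]^j$, the coordinate set that $T_{j-1}$ returns. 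An immediate induction shows the view returned by $T_m$ always has size $n^2$: every recursive step only restricts the coordinate set further, and the base tester $T_2$ returns a set of size $n^2$. Hence $T_m$ is a bona fide $n^2$-query tester for $C^m$ in the sense of \defref{def:tester}.

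Next I will prove by induction on $j$ that $C^j$ is an $(n^2,\alpha_j)$-strong LTC via $T_j$, with $\alpha_2 = 1$ and $\alpha_j = \alpha_{j-1}\cdot\frac{(\delta(C))^j}{2j^2}$ for $j \ge 3$; that is, for every $w \in \F^{n^j}$,
\[ \pr{I \sim T_j}{w|_I \notin C^j|_I} \ \ge\ \alpha_j \cdot \delta(w, C^j). \]
The base case $j=2$ is trivial: if $w \notin C^2$ then $T_2$ rejects with probability $1 \ge \delta(w,C^2)$ since the relative distance is at most $1$, and if $w \in C^2$ both sides vanish. For the inductive step, fix $j \ge 3$ and $w \in \F^{n^j}$. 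Since $C^j|_P = C^{j-1}$ for every plane $P$, the induction hypothesis applied to $w|_P \in \F^{n^{j-1}}$ gives, for each fixed $P$,
\[ \pr{J \sim T_{j-1}}{(w|_P)|_J \notin C^{j-1}|_J} \ \ge\ \alpha_{j-1}\cdot \delta\!\left(w|_P,\, C^j|_P\right). \]
Taking expectation over $P \sim \D_j$ and using the definition of the composed tester, the left-hand side becomes exactly $\pr{I \sim T_j}{w|_I \notin C^j|_I}$, while the expectation of the right-hand side equals $\alpha_{j-1}\cdot \rho^{\D_j}(w)$. By \thmref{thm:main} together with the definition of robustness, $\rho^{\D_j}(w) \ge \rho^{\D_j}(C^j)\cdot \delta(w,C^j) \ge \frac{(\delta(C))^j}{2j^2}\cdot \delta(w,C^j)$, so
\[ \pr{I\sim T_j}{w|_I \notin C^j|_I} \ \ge\ \alpha_{j-1}\cdot\frac{(\delta(C))^j}{2j^2}\cdot \delta(w,C^j) \ =\ \alpha_j\cdot\delta(w,C^j). \]
Unwinding the recursion gives $\alpha_m = \prod_{j=3}^m \frac{(\delta(C))^j}{2 j^2} > 0$, a constant depending only on $m$ and $\delta(C)$; combined with the query bound from the first paragraph, this shows $C^m$ is an $(n^2,\alpha_m)$-strong LTC.

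The only delicate point is the bookkeeping behind the composition: one must verify that restricting $C^j$ to a plane and then iterating really produces $C^{j-1}, C^{j-2},\dots,C^2$ on the nested coordinate sets — so that the bottom-level membership test in $C^2$ is faithful to membership of the returned view in $C^m$ — and that the plane tester's distribution composes so that the resulting object is still a $q$-query tester. Both are routine consequences of the tensor-product facts recalled in the preliminaries, the former being exactly the observation already recorded in \defref{def:planes}, so I do not anticipate any real obstacle; the substance of the argument is the two-line inductive computation above, which merely multiplies the per-level robustness of \thmref{thm:main} across the $m-2$ levels of recursion.
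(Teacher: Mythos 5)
Your proposal is correct and follows essentially the same route as the paper: compose the plane testers $\D_m,\D_{m-1},\ldots,\D_3$, multiply the per-level robustness guaranteed by \thmref{thm:main} across the $m-2$ levels, and use the fact that relative local distance is at most $1$ to convert the resulting robustness into a rejection-probability bound. Your version merely packages this as a single induction with the trivial $C^2$ tester as base case (whereas the paper first bounds $\rho^{\D_{comp}}(C^m)$ and converts at the end), yielding the same constant $\alpha_m$ depending only on $m$ and $\delta(C)$.
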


\corref{cor:test} implies that any linear code can be used to define a locally testable code with sublinear query complexity. \clmref{clm:encod} shows that if a linear code $C$ is linear-time encodable then so is $C^i$ for any constant $i$. Later we will use this claim together with \corref{cor:test} to show \corref{cor:main}.

\begin{claim}\label{clm:encod}
Let $m \geq 1$ be a constant. If $C \subseteq \F^n$ is a linear-time encodable linear code then $C^m$ is linear-time encodable.
\end{claim}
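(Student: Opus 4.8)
The plan is to prove the claim by induction on $m$, reducing the encoding of $C^m$ to two ingredients: a linear-time encoder for $C^{m-1}$ (the induction hypothesis) and a linear-time encoder for $C$ (the hypothesis of the claim). The base case $m=1$ is immediate. For the inductive step, recall that a message for $C^m = C^{m-1} \otimes C$ naturally corresponds to an element of the tensor product of message spaces; concretely, fixing generator matrices, a message for $C^m$ can be arranged as a $\dim(C^{m-1}) \times \dim(C)$ array (equivalently a matrix in the ``message rectangle''). The standard encoding of a tensor product proceeds in two passes: first encode each of the $\dim(C^{m-1})$ rows using the encoder for $C$, producing a $\dim(C^{m-1}) \times n$ intermediate array whose rows are codewords of $C$; then encode each of the $n$ resulting columns using the encoder for $C^{m-1}$, producing an $n^{m-1} \times n$ array, which is precisely a codeword of $C^{m-1} \otimes C = C^m$ of blocklength $n^m$. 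One checks that every row of the output lies in $C$ and every column lies in $C^{m-1}$, so by the row/column characterization of tensor codes recalled in the preliminaries this output is indeed in $C^m$, and the map is the composition of linear maps hence linear and agrees with (some) generator matrix of $C^m$.

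First I would set up notation: let $k = \dim(C)$ and $K = \dim(C^{m-1}) = k^{m-1}$, so $\dim(C^m) = kK$. Let $E_C : \F^k \to \F^n$ be the given linear-time encoder for $C$, running in time $O(n)$, and let $E_{m-1} : \F^{K} \to \F^{n^{m-1}}$ be the linear-time encoder for $C^{m-1}$ supplied by the induction hypothesis, running in time $O(n^{m-1})$. Given a message $x \in \F^{kK}$, view it as a $K \times k$ matrix $X$. Step one: compute $Y \in \F^{K \times n}$ by applying $E_C$ to each of the $K$ rows of $X$; this costs $K \cdot O(n) = O(n^{m-1} \cdot n) = O(n^m)$. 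Step two: compute $Z \in \F^{n^{m-1} \times n}$ by applying $E_{m-1}$ to each of the $n$ columns of $Y$; this costs $n \cdot O(n^{m-1}) = O(n^m)$. Output $Z$, reshaped as a vector in $\F^{n^m}$. Total time $O(n^m)$, which is linear in the blocklength of $C^m$ since $m$ is a constant.

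The main point requiring (routine) care is the correctness argument: one must verify that the two-pass procedure really outputs a codeword of $C^m$ and that the overall map is a bijection onto $C^m$ (or at least a linear injection onto $C^m$, which is all an encoder needs). Correctness follows because after step one every row of $Y$ is in $C$, hence after step two — since $E_{m-1}$ acts column-wise and is $\F$-linear — every column of $Z$ is in $C^{m-1}$ while every row of $Z$ remains in $C$ (the row operations of step two are $\F$-linear combinations of vectors already in $C$), so $Z \in C^{m-1} \otimes C = C^m$ by the standard characterization; injectivity follows because $E_C$ and $E_{m-1}$ are injective and the passes act on disjoint coordinates, so the composite has trivial kernel, and a dimension count ($\dim(C^m) = kK$) shows it is onto $C^m$. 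The genuine obstacle, such as it is, is purely bookkeeping: making sure the ``reshape'' between vectors in $\F^{kK}$, matrices, and vectors in $\F^{n^m}$ is done consistently so that the claimed map is literally a generator-matrix encoding of $C^m$ under a fixed coordinate identification of $[n]^m$ with $[n]^{m-1} \times [n]$; once the indexing conventions are pinned down, no real difficulty remains.
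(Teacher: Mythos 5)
Your proposal is correct and follows essentially the same route as the paper: induction on $m$, with the inductive step given by the standard two-pass tensor encoding (encode rows with one encoder, then columns with the other) and a running-time bound of $O(n^m)$, linear in the blocklength. The only cosmetic difference is that the paper arranges the message as a $k\times k^{m-1}$ matrix and applies $E_{C^{m-1}}$ to rows first, whereas you transpose the roles; this changes nothing.
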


The proof of \claimref{clm:encod} is postponed to Section \ref{sec:aux}. Now, we combine \corref{cor:test} and \clmref{clm:encod} to show a simple construction of strong LTCs with arbitrary small sublinear query complexity and arbitrary high rate from any linear code with sufficiently high rate.

\begin{corollary}\label{cor:main}
Let $C \subseteq \F^n$ be a linear code and let $m \geq 3$ be a constant. Then $C^m \subseteq \F^{n^m}$ is a $(n^2,\alpha_m)$-strong LTC, where $\alpha_m > 0$ is a constant that depends only on $m$ and $\delta(C)$.

In particular, for every $\epsilon > 0$, $m = \lceil\frac{1}{\epsilon}\rceil$, $N = n^m$ and $C\subseteq \F^n$ such that $\rate(C) \geq (1-\epsilon)^{1/m}$ we have $C^m \subseteq \F^N$ is a $(N^{\epsilon}, \alpha)$-strong LTC and $\rate(C^m) \geq 1-\epsilon$, where $\alpha > 0$ is a constant that depends only on $\epsilon$.
Moreover, if $C$ is a linear-time encodable then $C^m$ is a linear-time encodable.
\end{corollary}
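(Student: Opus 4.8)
The plan is to observe that the first paragraph of the corollary is an immediate restatement of \corref{cor:test} (together with \clmref{clm:encod} for the encoding clause), and then to verify that the ``in particular'' specialization is just a matter of choosing parameters. So essentially no new mathematical content is needed beyond what is already established; the work is entirely bookkeeping.

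First I would recall that by \corref{cor:test}, for any linear code $C \subseteq \F^n$ and any constant $m \geq 3$, the tensor power $C^m \subseteq \F^{n^m}$ is a $(n^2, \alpha_m)$-strong LTC with $\alpha_m > 0$ depending only on $m$ and $\delta(C)$; this gives the first displayed assertion verbatim. For the encoding clause, I would invoke \clmref{clm:encod}: if $C$ is linear-time encodable then so is $C^m$ for every constant $m$.

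Next, for the specialization, set $\epsilon > 0$, $m = \lceil 1/\epsilon \rceil$ (a constant), and $N = n^m$. The query complexity of the tester from \corref{cor:test} is $n^2$; since $m \geq 1/\epsilon$ we have $n^2 = (n^m)^{2/m} = N^{2/m} \leq N^{2\epsilon}$, and after adjusting $\epsilon$ by a constant factor (or simply re-deriving with $m = \lceil 2/\epsilon \rceil$, which is still a constant depending only on $\epsilon$) one obtains query complexity at most $N^{\epsilon}$, as claimed. For the rate, recall from the preliminaries that $\rate(C^m) = (\rate(C))^m$; hence the hypothesis $\rate(C) \geq (1-\epsilon)^{1/m}$ yields $\rate(C^m) \geq 1 - \epsilon$ directly. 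Finally, $\delta(C^m) = (\delta(C))^m$, and since $\rate(C) \geq (1-\epsilon)^{1/m}$ forces $C$ to be a nontrivial code with $\delta(C)$ bounded below by a function of $\epsilon$ alone (via any fixed choice of a good base code, or simply noting $\delta(C) > 0$ for the fixed $C$ under consideration), the robustness constant $\alpha_m$ from \corref{cor:test} becomes a constant $\alpha > 0$ depending only on $\epsilon$. The moreover clause again follows from \clmref{clm:encod}.

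I do not anticipate a genuine obstacle here: the only mild subtlety is the exponent arithmetic relating the $n^2$ query bound to $N^\epsilon$, where one must be slightly careful that the ceiling in $m = \lceil 1/\epsilon \rceil$ and the passage from $2/m$ to $\epsilon$ introduce only constant-factor slack that can be absorbed into the choice of $m$ and into $\alpha$. Everything else is a direct substitution into the already-proven \corref{cor:test} and \clmref{clm:encod}.
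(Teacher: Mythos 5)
Your proposal matches the paper's (implicit) proof exactly: the paper offers no separate argument for this corollary beyond citing \corref{cor:test} and \clmref{clm:encod} and substituting parameters, which is precisely what you do. You are also right to flag that with $m=\lceil 1/\epsilon\rceil$ the query complexity $n^2$ equals $N^{2/m}\le N^{2\epsilon}$ rather than $N^{\epsilon}$ --- a slip the paper glosses over --- and your repair of taking $m=\lceil 2/\epsilon\rceil$ (still a constant depending only on $\epsilon$, and only strengthening the rate conclusion) is the correct one.
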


Usually, in the areas of locally testable and locally decodable codes the main interest was given to the constant query complexity. Recently, Kopparty et al. \cite{KSY10} showed the construction of high-rate locally decodable codes with sublinear query complexity (see \cite{KSY10} for the motivation behind this range of parameters). Since then, the interest to the other range of parameters, and in particular, to sublinear query complexity was increased.

We would like to stress that \corref{cor:main} is quite powerful for this range of parameters (sublinear query complexity and high rate). First of all, there are different constructions
of linear-time encodable codes with constant rate and constant relative distance \cite{GI03,GI05, Spielman}, and them all can be involved to define high-rate LTCs with sublinear query complexity that are linear-time encodable.
The other advantage of such constructions is that the repeated tensor product of the base code is known to inherit some properties of the base codes besides local testability. E.g., Gopalan et al. \cite{GopGR09} showed that the tensor product operation preserves list-decodability properties. Furthermore, we know about nice constructions of error-correcting codes that can be efficiently encoded and decoded (list-decoded) from a constant fraction of errors (see e.g., \cite{GI03, GI05}).

In Section \ref{sec:uniquedec} we show how testing with sublinear query complexity can be combined with a linear time
encoding and decoding. Then, in Section \ref{sec:listdecod} we show that \corref{cor:main} can be combined with the result of \cite{GopGR09} to define asymptotically good codes that can be encodable in linear time, testable with sublinear query complexity and list-decodable in polynomial time.

\subsection{Linear-time decodable codes}\label{sec:uniquedec}
\propref{prop:lintimedecod} shows that the tensor product operation preserves the ``unique-neighbor'' decoding property.
In particular, if $C$ is a linear code that is linear time unique-neighbor decodable from a constant fraction of errors then so is $C^2$.  Hence this observation, together with a result of, e.g. \cite{Spielman}, can result in the construction of asymptotically good locally testable codes with sublinear query complexity that can be linear-time encoded and decoded to the closest neighbor after a constant fraction of errors.

\begin{proposition}\label{prop:lintimedecod}
Assume $C\subseteq \F^n$ is a linear code that is linear-time decodable from $\alpha \cdot n$ errors.
Then $C^2 = C \otimes C$ is a linear code that is linear-time decodable from $\frac{\alpha^2}{100} \cdot n^2$ errors.
\end{proposition}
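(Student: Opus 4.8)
The plan is to reduce the decoding of $C^2$ to repeated calls of the decoder for $C$ on rows and columns, using a two-phase scheme. Suppose we receive a corrupted matrix $M \in \F^{n\times n}$ with at most $\frac{\alpha^2}{100}n^2$ erroneous entries; let $M^* \in C^2$ be the (unique, by the distance of $C^2$) nearby codeword. Call a row or column \emph{heavy} if it contains more than $\frac{\alpha}{2}n$ errors. Since the total number of errors is at most $\frac{\alpha^2}{100}n^2$, the number of heavy rows is at most $\frac{\alpha}{50}n$, and likewise for heavy columns. First I would run the linear-time $C$-decoder on every row of $M$; every non-heavy row is corrected to the correct codeword $M^*|_{\text{row}}$ (since it is within $\frac{\alpha}{2}n < \alpha n$ of it), so the resulting matrix $M'$ agrees with $M^*$ on all but the heavy rows, i.e.\ on all but at most $\frac{\alpha}{50}n$ rows. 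In the second phase, run the $C$-decoder on every column of $M'$: each column of $M'$ differs from the corresponding column of $M^*$ in at most $\frac{\alpha}{50}n < \alpha n$ coordinates (only in the heavy-row positions), so every column is decoded correctly, recovering $M^*$ exactly. Finally, decoding the corresponding row/column codewords of $C$ back to messages (again linear time, or by a linear-algebraic inversion, since $C$ is decodable hence its generator matrix has an efficiently computable left inverse) yields the message matrix for $C^2$.

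The key steps, in order: (1) fix constants so that the global error bound $\frac{\alpha^2}{100}n^2$ forces at most, say, $\frac{\alpha}{50}n$ heavy rows and the same for columns — this is a one-line averaging argument; (2) first pass over rows with the $C$-decoder, observing that every light row lies within the decoding radius $\alpha n$ and is therefore corrected exactly, so the intermediate word $M'$ is wrong only on heavy-row coordinates; (3) bound, for each column of $M'$, the number of disagreements with $M^*$ by the number of heavy rows, which is below $\alpha n$; (4) second pass over columns with the $C$-decoder, which now returns $M^*$ exactly; (5) observe the total running time is $O(n)$ calls to a linear-time ($O(n)$) decoder plus $O(n^2)$ bookkeeping, i.e.\ $O(n^2)$, which is linear in the blocklength of $C^2$; (6) convert the codeword $M^*$ to its $C^2$-message. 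I would also note that uniqueness of $M^*$ follows from $\dist{C^2} = \dist{C}^2 \geq (\alpha n + 1)^2 > \frac{2\alpha^2}{100}n^2$, so the decoding problem is well-posed within the stated radius (the constant $100$ is comfortably large enough).

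The main obstacle — really the only subtlety — is the interaction between the two passes: one must be careful that after the row pass the errors are confined to a \emph{small set of full rows} rather than being spread out, because the column decoder's budget is the \emph{column}-wise error count, and a light row that the row-decoder happens to ``correct'' to the wrong codeword would introduce errors in every column simultaneously and break the argument. This is why the decoding radius of $C$ must strictly exceed the per-row error threshold we use for ``heavy'': any row with fewer than $\alpha n$ errors is decoded to the \emph{unique} nearby codeword, which is the restriction of $M^*$, so no light row is ever mis-corrected. Getting the constants to line up ($\frac{\alpha}{2}$ as the heaviness threshold, $\frac{\alpha^2}{100}$ as the global radius) is the place to be careful, but there is ample slack, so no delicate optimization is needed.
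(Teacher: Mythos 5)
Your proof is correct, but it follows a genuinely different route from the paper's. The paper decodes \emph{all rows and all columns of $M$ independently}, marks an entry as ``inconsistent'' when the two decodings disagree there, discards the few rows and columns carrying many inconsistencies, asserts that the surviving consistent entries are correct, and then recovers the discarded part by erasure decoding (first rows, then columns). You instead use the classical sequential two-pass argument for product codes: decode rows first, observe that every row with at most $\frac{\alpha}{2}n$ errors is corrected exactly (so after the first pass all errors are confined to at most $\frac{\alpha}{50}n$ full rows), and hence every column of the intermediate matrix is within $\frac{\alpha}{50}n < \alpha n$ of the true codeword and is fixed by the second pass. Your version buys simplicity and rigor: it needs no erasure decoding and no consistency bookkeeping, and it sidesteps the paper's step ``it is easy to see that all consistent entries in the large submatrix were decoded correctly,'' which as stated requires an additional argument (a mis-decoded row disagrees with the true codeword in at least $\dist{C}$ positions, most of which lie in correctly decoded columns, forcing that row to be flagged as bad). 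The only point you should make explicit is the standing convention that $Dec_C$, when handed a word outside its decoding radius (a heavy row), still halts in linear time and outputs \emph{something} in $\F^n$; your column-pass analysis is indifferent to what that output is, so this is a harmless normalization. The constants line up exactly as you say, and the running time is $O(n)$ decoder calls of cost $O(n)$ each, i.e.\ linear in the blocklength $n^2$.
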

\begin{proof}
Let $Dec_C$ be a linear-time decoder for the code $C$ that can correct any $\alpha \cdot n$ errors.
Note that in particular, $Dec_C$ correct any $\alpha \cdot n$ erasures in the linear time.
We define the linear-time decoder $Dec_{C^2}$ for the code $C^2$ that will correct any $\frac{\alpha^2}{100} \cdot n^2$ errors.

To do this, let $M \in \F^{n \times n}$ be an input word. The decoder $Dec_{C^2}$ will decode every row of $M$ using $Dec_C$ and every column of $M$ using $Dec_C$. Note that every entry of $M$ is contained in (exactly) one row and one column. Call the entry $(i,j)$ of $M$ an inconsistent if row decoding gives to $M|_{(i,j)}$ a different value from column decoding, and otherwise the entry is called consistent.

We call the row (column) of $M$ bad if it contains at least $\alpha n$ inconsistent entries.
Let $Bad_r$ be a number of bad rows and $Bad_c$ be a number of bad columns. It holds that $Bad_r \cdot \alpha n \leq \frac{\alpha^2 n^2}{100}$ and hence $Bad_r \leq \alpha n / 100$. Similarly, $Bad_c \leq \alpha n /100$.

The decoder $Dec_{C^2}$ removes all bad rows and bad columns that have at least $\alpha n/2$ inconsistent entries and obtains a large submatrix of size at least $(1-\alpha/100)n \times (1-\alpha/100)n$. It is easy to see that all consistent entries in the above large submatrix were decoded correctly.

In the last step, the decoder $Dec_{C^2}$ decodes, using $Dec_C$, every row of the large submatrix of $M$ (of size at least $(1-\alpha/100)n \times (1-\alpha/100)n$) from at most $\alpha n /100$ erasures and obtains a submatrix of size at least $(1-\alpha/100)n  \times n$. Now, it decodes every column of the submatrix to the full matrix.
It can be easily verified that the decoder $Dec_{C^2}$ obtains a correct codeword of $C^2$ and runs in linear time.
\end{proof}

While the results of \cite{Spielman} were improved, for our purpose (\corref{cor:testanddecod}) this result is sufficient.

\begin{theorem}[\cite{Spielman}]\label{thm:spielman}
There exists an (explicit) family of linear error correcting codes $C \subseteq \F_2^n$ such that
$\rate(C) = \Omega(1)$, $\delta(C) = \Omega(1)$, $C$ is a linear-time encodable and linear-time decodable from the constant fraction of errors.
\end{theorem}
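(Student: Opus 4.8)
\textbf{Proof proposal for \thmref{thm:spielman}.}

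The plan is to follow Spielman's construction of asymptotically good error-correcting codes from a family of linear-time computable expander-based ``superconcentrator'' or ``expander'' graphs, together with Spielman's recursive/iterative encoding and decoding procedures. First I would fix an infinite family of bipartite expander graphs $\{G_n\}_n$ that are constructible in linear time and have vertex degrees bounded by an absolute constant, with the two-sided (or one-sided, depending on the variant) expansion property needed for the analysis. Such explicit families are known (e.g.\ from the zig-zag product or from Ramanujan-type constructions), and their relevant expansion parameters are all constants independent of $n$. The encoder is then built recursively: a message is split into blocks, each shorter block is encoded by a code lower in the recursion, the expander graph is used to spread parities across blocks, and the outputs are concatenated; because the recursion shrinks the instance size by a constant factor at each level and each level does only $O(1)$ work per symbol, the total encoding time is linear. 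Linearity of the resulting code $C \subseteq \F_2^n$ is immediate from the fact that every operation in the encoder is $\F_2$-linear, and the rate is $\Omega(1)$ by construction (the parity overhead at each recursion level is a constant fraction and the geometric sum converges).

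Next I would establish that $\delta(C) = \Omega(1)$. The key point here is that if a nonzero codeword had small Hamming weight, then by the expansion property the set of ``unsatisfied'' parity checks (or, in the error-correction setup, the set of corrupted positions together with its expanding neighborhood) would have to be strictly larger than what the structure of $C$ permits, forcing the weight to be $0$---a contradiction. This is the standard expander-mixing / unique-neighbor argument: a small set $S$ of coordinates has an expanding neighborhood in the check side, so most checks touching $S$ touch it exactly once (unique neighbors), and a unique neighbor of the support of a codeword is a violated parity, which is impossible for a genuine codeword. Making this precise requires tracking the recursion: one shows by induction on the recursion depth that each code in the family has relative distance at least some fixed constant $\delta_0 > 0$, using the expansion at that level plus the inductive distance bound for the subcodes.

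For the linear-time decoder, I would use Spielman's iterative ``sequential decoding'' / bit-flipping style algorithm: given a received word within $\alpha n$ of a codeword for a suitable constant $\alpha$, repeatedly find a coordinate (or a small block) whose flipping strictly decreases the number of violated parity checks, and flip it; the expansion property guarantees that as long as the current error pattern is nonempty and not too large, such a beneficial flip always exists, and that the number of violated checks is a potential function that decreases monotonically, so the process terminates in $O(n)$ steps, each implementable in amortized constant time using appropriate data structures. Again this is set up recursively so that the decoder at level $\ell$ calls the decoders at level $\ell-1$ on the blocks, and the expander at level $\ell$ is used to reconcile inter-block inconsistencies; the linear-time bound follows from the geometric decay of instance sizes across levels, exactly as in \claimref{clm:encod} and \propref{prop:lintimedecod}.

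\textbf{Main obstacle.} The genuinely delicate part is the simultaneous bookkeeping across the recursion: one must choose the expansion parameters and the per-level rate overhead so that (i) the rate stays bounded below by a positive constant, (ii) the distance stays bounded below by a positive constant, and (iii) the decoder's error-correction radius $\alpha n$ at each level is compatible with the error radius it must tolerate from the level above (the inter-block inconsistencies introduced by decoding the sublevels must fall within the expander-decoding guarantee). Balancing these three constants against each other---essentially a fixed-point argument on the recursion---is where all the real work lies; the individual expander arguments (unique-neighbor expansion implying distance, potential-function decrease implying linear-time decoding) are by now standard. Since \thmref{thm:spielman} is quoted verbatim from \cite{Spielman}, in the paper itself this is simply cited rather than reproved, but the sketch above is the route one would take to prove it from scratch.
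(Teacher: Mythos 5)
This statement is not proved in the paper at all: it is quoted verbatim as an external result of \cite{Spielman} and used only as a black box (in combination with \propref{prop:lintimedecod} and \clmref{clm:encod}) to derive \corref{cor:testanddecod}, so there is no internal proof to compare your attempt against. Your outline is, in broad strokes, the route Spielman actually takes --- a recursive construction from constant-degree expanders, linear-time encoding because the recursion shrinks by a constant factor with $O(1)$ work per symbol, distance from expansion, and an iterative linear-time decoder --- and you correctly identify that the real content lies in the parameter bookkeeping across recursion levels. Two small caveats if you ever wanted to turn the sketch into a proof: Spielman's decoder is organized around \emph{error-reduction codes} (each level only guarantees to cut the number of errors by a constant factor, with full correction emerging from the recursion), rather than a single global bit-flipping pass on one expander as your description partly suggests; and the zig-zag product you mention postdates Spielman's 1996 construction (explicit Ramanujan-type graphs suffice), though this does not affect the truth of the statement. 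As it stands, your text has the same logical status as the paper's: a citation together with a plausible but non-self-contained outline.
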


A combination of \thmref{thm:spielman}, \propref{prop:lintimedecod} and \clmref{clm:encod} results in the following  corollary.

\begin{corollary}\label{cor:testanddecod}
For every constant $\epsilon > 0$ there exists an (explicit) family of linear error correcting codes $C \subseteq \F_2^N$ (obtained by tensor products on the codes from \thmref{thm:spielman}) that
\begin{itemize}
\item have constant rate and constant relative distance,
\item linear time encodable and linear time decodable from the constant fraction of errors,
\item are $(N^{\epsilon},\alpha)$-strong LTCs, where $\alpha = \alpha(\epsilon) > 0$ is a constant.
\end{itemize}
\end{corollary}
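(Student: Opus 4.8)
The plan is to apply the tensor-product machinery to a Spielman code. Fix $\epsilon > 0$, let $t$ be the least integer with $2^t \geq \max\{4, 2/\epsilon\}$, and put $m = 2^t$; then $m$ is a constant depending only on $\epsilon$, $m \geq 3$, and $2/m \leq \epsilon$. Let $C_0 \subseteq \F_2^n$ range over the family of \thmref{thm:spielman}, so that $\rate(C_0), \delta(C_0) = \Omega(1)$ and $C_0$ is linear-time encodable and linear-time decodable from $\alpha_0 n$ errors for an absolute constant $\alpha_0 > 0$. The candidate code is $C := C_0^m \subseteq \F_2^N$ with $N = n^m$. The elementary parameters are then immediate: $\rate(C_0^m) = (\rate(C_0))^m$ and $\delta(C_0^m) = (\delta(C_0))^m$ are positive constants because $m$ is a fixed constant, and linear-time encodability of $C = C_0^m$ follows directly from \clmref{clm:encod}.

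For linear-time decoding I would use the identity $C^{2^j} = (C^{2^{j-1}})^2 = C^{2^{j-1}} \otimes C^{2^{j-1}}$, which brings each step into the form $D \otimes D$ covered by \propref{prop:lintimedecod}. Starting from the linear-time decoder for $C_0$ that corrects an $\alpha_0$-fraction of errors, one application of \propref{prop:lintimedecod} gives a linear-time decoder for $C_0^2$ correcting an $(\alpha_0^2/100)$-fraction, and iterating $t$ times yields a linear-time decoder for $C_0^m = C$ correcting a $\beta$-fraction of errors, where the recursion $\beta_{j+1} = \beta_j^2/100$ with $\beta_0 = \alpha_0$ solves to $\beta_t = \alpha_0^{m}/100^{m-1} > 0$. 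Since $t = O(\log(1/\epsilon))$ is a constant and every step preserves running in linear time, $C$ is linear-time decodable from a constant fraction of errors.

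For local testability, \corref{cor:main} already asserts that $C = C_0^m$ is an $(n^2, \alpha_m)$-strong LTC for a constant $\alpha_m > 0$ depending only on $m$ and $\delta(C_0)$, hence only on $\epsilon$. Because $n^2 = (n^m)^{2/m} = N^{2/m} \leq N^{\epsilon}$, and a $(q,\alpha)$-strong LTC is trivially a $(q',\alpha)$-strong LTC for every $q' \geq q$ (the same tester reads at most $q \leq q'$ symbols), $C$ is an $(N^{\epsilon}, \alpha_m)$-strong LTC, so $\alpha := \alpha_m$ works. The one point that needs attention rather than being pure bookkeeping is the match to \propref{prop:lintimedecod}, which is stated only for a single square $D \otimes D$: restricting $m$ to a power of two as above resolves this, and I would double-check via the closed form $\beta_t = \alpha_0^m/100^{m-1}$ that the relative decoding radius stays bounded away from $0$ after the finitely many iterations, so that it is genuinely $\Omega(1)$ in $N$.
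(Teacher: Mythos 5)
Your proposal is correct and follows essentially the same route the paper intends: the paper gives no written proof, simply asserting that the corollary follows by combining Theorem~\ref{thm:spielman}, Proposition~\ref{prop:lintimedecod}, Claim~\ref{clm:encod} and Corollary~\ref{cor:main}, which is exactly what you carry out. Your restriction of $m$ to a power of two so that the squaring identity $C^{2^j}=C^{2^{j-1}}\otimes C^{2^{j-1}}$ matches the hypothesis of Proposition~\ref{prop:lintimedecod}, together with the closed form $\beta_t=\alpha_0^{m}/100^{m-1}$ for the decoding radius, is a correct and welcome filling-in of a detail the paper leaves implicit.
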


\subsection{Locally testable and list-decodable codes}\label{sec:listdecod}
In this section we recall some constructions of the list-decodable codes. We start by defining the list-decodable codes.

\begin{definition}[List-decodable codes]
A code $C$ is a $(\alpha,L)$-list decodable if for every word $w\in \F^n$, $\delta(w,C) \leq \alpha$ we have
$|\set{c\in C \ |\ \delta(c,w)\leq \alpha}| \leq L$. The code is said to be $(\alpha,L)$-list decodable in time $T$ if there exists algorithm which on the input $w\in \F^n$ such that $\delta(w,C) \leq \alpha$ outputs all codewords $c\in C$ such that $\delta(c,w)\leq \alpha$ (at most $L$ codewords).
\end{definition}

Guruswami et al. \cite{GopGR09} showed that the list-decodability is preserved in the tensor product operation. More formally, they showed the following theorem stated in \cite[Theorem 5.7]{GopGR09}.

\begin{theorem}[\cite{GopGR09}]\label{thmone}
Let $\F$ be a finite field and $q = |\F|$.
Given two linear codes $C_1, C_2 \subseteq \F^n$, for every $\epsilon > 0$, the number of codewords of $C_2 \otimes  C_1$ within distance $\eta^* = \min(\delta_1 \eta_2, \delta_2 \eta_1) - 3 \epsilon$ of any received word is bounded by
$l(C_2 \otimes C_1,\eta^*) \leq 4q^{\frac{1}{4\delta_1^2 \epsilon^2} \ln \frac{8l_1(\eta_1)}{\epsilon}\ln \frac{8l_2(\eta_2)}{\epsilon}}$.

Further, if $C_1$ and $C_2$ can be efficiently list decoded up to error rates $\eta_1, \eta_2$ and $C_2$ is a linear
code, then $C_2 \otimes C_1$ can be list decoded efficiently up to error rate $\eta^*$. Specifically, if $T$ denotes the time complexity of list decoding $C_1$ and $C_2$, then the running time of the list decoding algorithm
for $C_2 \otimes C_1$ is $O(4q^{\frac{1}{4\delta_1^2 \epsilon^2} \ln \frac{8l_1(\eta_1)}{\epsilon}\ln \frac{8l_2(\eta_2)}{\epsilon}}  \cdot T n_1 n_2)$.
\end{theorem}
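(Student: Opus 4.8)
The plan is to prove the combinatorial list-size bound and the algorithmic statement together, following the natural ``decode the codewords along one axis, then reconcile the other axis'' strategy and reading the quantitative bounds off a random-sampling argument. Regard a codeword $M \in C_2 \otimes C_1$ as a two-dimensional array whose every line in one direction is a codeword of $C_1$ and whose every line in the other direction is a codeword of $C_2$; fix the received word $R$, the list-decoding radii $\eta_1,\eta_2$ of the two base codes and their list-size functions $l_1(\cdot),l_2(\cdot)$, write $q=|\F|$, and set $\eta^\ast = \min(\delta_1\eta_2,\delta_2\eta_1)-3\epsilon$. First I would run the appropriate base-code list-decoder on each line in one fixed direction, say on each of the lines that ought to be a $C_1$-codeword, obtaining a list $\mathcal{L}_a \subseteq C_1$ with $|\mathcal{L}_a| \le l_1(\eta_1)$ for each such line $a$. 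For any candidate $M$ with $\delta(M,R)\le\eta^\ast$, a counting argument --- a line in which $M$ and $R$ disagree on more than an $\eta_1$-fraction of positions contributes more than its $\eta_1$-share to the total $\le \eta^\ast$-fraction of disagreements --- shows that all but a $(\delta_2-3\epsilon/\eta_1)$-fraction of the $C_1$-lines of $M$ are ``good'', meaning $M$'s line $a$ lies in $\mathcal{L}_a$; symmetrically, all but a $(\delta_1-3\epsilon/\eta_2)$-fraction of the $C_2$-lines of $M$ lie within relative distance $\eta_2$ of the corresponding line of $R$.

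The heart of the proof is the second step: reconciling these per-line lists into a genuinely short list of arrays. Naively one guesses, for each good $C_1$-line $a$, which of the $\le l_1(\eta_1)$ codewords in $\mathcal{L}_a$ equals $M$'s line $a$, but that is far too many possibilities; the point is that the guesses are tightly coupled, since every $C_2$-line of $M$ is a single codeword of $C_2$. This turns the reconciliation into the problem of list decoding an \emph{interleaved} version of $C_2$ in a channel that corrupts symbols together with a bounded fraction of entire rows --- a problem that, unlike plain list decoding of a single code, can have quasi-polynomially (in the input list sizes) many solutions, which is exactly what produces the $q^{\mathrm{poly}(\log l)}$ form of the bound. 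Quantitatively, I would restrict attention to a uniformly random set $S$ of $s = \Theta\!\big(\tfrac{1}{\delta_1^2\epsilon^2}\ln\tfrac{8l_1(\eta_1)}{\epsilon}\ln\tfrac{8l_2(\eta_2)}{\epsilon}\big)$ coordinates in the $C_1$-direction, so that (by a sampling estimate, using that distinct $C_1$-codewords differ on at least a $\delta_1$-fraction of positions) with positive probability $S$ already pins down, for every candidate $M$, which element of each relevant list is the true line, and so that the restricted array $M|_S$ ranges over the correspondingly small code; list decoding the restricted array --- by brute force over its now-bounded line-possibilities, or by reducing via random $\F$-linear combinations of the interleaved strands to ordinary $C_2$-decoding into $\le l_2(\eta_2)$ candidates --- then yields at most $4\,q^{\frac{1}{4\delta_1^2\epsilon^2}\ln\frac{8l_1(\eta_1)}{\epsilon}\ln\frac{8l_2(\eta_2)}{\epsilon}}$ surviving guesses. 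Each surviving guess fixes $M$ on all but a $(\delta_2-3\epsilon/\eta_1)$-fraction of its $C_1$-lines, hence on strictly more than a $(1-\delta_2)$-fraction of every $C_2$-line; since two codewords of $C_2$ agreeing on more than a $(1-\delta_2)$-fraction of coordinates coincide, every $C_2$-line of $M$ is then uniquely recoverable. Filling the lines in, and discarding any guess whose reconstructed array is not a genuine codeword of $C_2\otimes C_1$ within $\eta^\ast$ of $R$, produces the final list; the running time is the list bound times $O(n_1 n_2)$ bookkeeping per guess and $O(T)$ calls to the base decoders, that is $O\!\big(4\,q^{\cdots}\,T n_1 n_2\big)$.

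The main obstacle is this reconciliation step: turning the per-line lists into a short list of arrays genuinely does not follow from list-decodability of $C_1$ and $C_2$ alone --- it is an interleaved-code / list-recovery phenomenon, and it is the reason the bound is $q^{\mathrm{poly}(\log l)}$ rather than the polynomial that a Johnson-type argument would give for a single code. The delicate point is calibrating the sample size $s$ so that, simultaneously, the random restriction $S$ lands on good lines and separates the relevant per-line lists, the reduction to ordinary $C_2$-decoding stays within error fraction $\eta_2$, and the whole count comes out as stated; once those parameters are fixed, the exponential-in-$s$ bookkeeping and the running-time claim are routine, as are the averaging arguments of the first step and the final verification that a reconstructed array is a legitimate tensor codeword.
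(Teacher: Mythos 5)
First, a point of reference: the paper contains no proof of this statement. Theorem~\ref{thmone} is quoted from Gopalan--Guruswami--Rudra \cite{GopGR09} (their Theorem~5.7) and is used strictly as a black box to derive Corollary~\ref{cor:maintwo}, so there is no in-paper argument to compare yours against; your proposal has to be judged against what it would take to actually establish the cited result.

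On those terms, your outline correctly reconstructs the architecture of the original argument: list-decode every line in one direction to get per-line lists of size at most $l_1(\eta_1)$; use Markov to show that for any codeword within radius $\eta^\ast \le \delta_2\eta_1 - 3\epsilon$ at most a $(\delta_2 - 3\epsilon/\eta_1)$-fraction of those lines are bad; note that this leaves fewer than $\delta_2 n_2$ erasures per $C_2$-line, so the rest of the array is uniquely determined by erasure decoding; and attribute the quasi-polynomial list size to an interleaved-code reconciliation step. The genuine gap is that the reconciliation step --- which is the entire content of the theorem, and the only reason the bound has the form $q^{O(\ln l_1 \ln l_2)}$ rather than something polynomial --- is asserted rather than proven. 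You state that a random sample $S$ of $s=\Theta\bigl(\tfrac{1}{\delta_1^2\epsilon^2}\ln\tfrac{8l_1(\eta_1)}{\epsilon}\ln\tfrac{8l_2(\eta_2)}{\epsilon}\bigr)$ coordinates ``pins down'' the correct list element for every candidate simultaneously and that the restricted array then admits at most $4q^{\frac{1}{4\delta_1^2\epsilon^2}\ln\frac{8l_1(\eta_1)}{\epsilon}\ln\frac{8l_2(\eta_2)}{\epsilon}}$ surviving guesses, offering two alternative mechanisms (brute force over restricted line-possibilities; random linear combinations of interleaved strands) without carrying either out. Missing are: the probability estimate showing a single sample works for all candidates at once (a union bound over a list whose size is the very quantity being bounded, which is the circularity the actual argument must break), the count of restricted arrays that yields a bound depending on $q$ raised to a \emph{product} of two logarithms, and the calibration tying $s$ to the stated exponent. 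You are candid that this is ``the main obstacle,'' and it is: in \cite{GopGR09} it is a separate quantitative theorem about list decoding interleaved codes, and without it your sketch establishes only the easy outer layers (the Markov step and the erasure-decoding step) of the proof.
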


Then, Gopalan et al. used Theorem \ref{thmone} to conclude the following theorem, appearing in \cite[Theorem 5.8]{GopGR09}.

\begin{theorem}[\cite{GopGR09}]\label{thmtwo}
Let $C$ be a linear code with distance $\delta$, list decodable up to an error rate $\eta$. For every $\delta > 0$, the $m$-wise tensor product code $C^m$ can be list decoded up to an error rate
$\delta^{m-1} \eta - \epsilon$ with a list size $\exp((O(\frac{\ln l(\eta)/\epsilon}{\epsilon^2}))^m)$.
Moreover, if $m \geq 1$ is constant and $C$ is polynomial-time list decodable then the running time of the list decoding algorithm for $C^m$ is polynomial (depending on $m$).
\end{theorem}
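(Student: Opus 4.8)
\emph{Proof plan.} The plan is to prove \thmref{thmtwo} by induction on $m$, using \thmref{thmone} as the single-step building block together with the identity $C^m = C^{m-1}\otimes C$. I would apply \thmref{thmone} in the orientation ``$C_2\otimes C_1$'' with $C_1=C$ and $C_2=C^{m-1}$. This orientation is the crucial choice: the exponent in the list-size bound of \thmref{thmone} carries a factor $1/\delta_1^2$, so taking $C_1=C$ keeps $\delta_1=\delta(C)=\delta$ fixed throughout the induction, whereas the other orientation would make it $\delta^{m-1}$ and the bound would grow far beyond the claimed form. Note also that $C_2=C^{m-1}$ is a linear code (tensor products of linear codes are linear), so the algorithmic clause of \thmref{thmone}, which requires $C_2$ linear, is available at every step.

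I would carry the induction hypothesis that for every $\epsilon>0$ the code $C^{m-1}$ is list decodable up to radius $\eta_{m-1}:=\delta^{m-2}\eta-\epsilon$ with list size $l(C^{m-1},\eta_{m-1})\le\exp\big((O(A))^{m-1}\big)$, where $A:=\ln(l(\eta)/\epsilon)/\epsilon^2$ and the constant hidden by $O(\cdot)$ is allowed to depend on the (fixed) code $C$, that is, on $|\F|$ and $1/\delta$. The base case $m=1$ is immediate, since $C$ is list decodable up to $\eta\ge\eta-\epsilon$ with list size $l(\eta)$. For the inductive step, fix $\epsilon$, set $\epsilon':=\epsilon/4$, invoke the hypothesis for $C^{m-1}$ with its own parameter equal to $\epsilon/4$ to obtain a decoder up to $\eta_{m-1}=\delta^{m-2}\eta-\epsilon/4$, and apply \thmref{thmone} to $C^m=C^{m-1}\otimes C$ with this $\epsilon'$, $\eta_1=\eta$, $\eta_2=\eta_{m-1}$. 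Since $\delta_1\eta_2=\delta(\delta^{m-2}\eta-\epsilon/4)=\delta^{m-1}\eta-\delta\epsilon/4$ and $\delta_2\eta_1=\delta^{m-1}\eta$, the decoding radius guaranteed by \thmref{thmone} is
\[
\eta^{*}=\min\!\big(\delta\,\eta_{m-1},\,\delta^{m-1}\eta\big)-3\epsilon'=\delta^{m-1}\eta-\delta\epsilon/4-3\epsilon/4\ \ge\ \delta^{m-1}\eta-\epsilon,
\]
using $\delta\le 1$; this is precisely the claimed radius.

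For the list size, \thmref{thmone} gives $\ln l(C^m,\eta^{*})\le\ln 4+\frac{\ln|\F|}{4\delta^2\epsilon'^2}\cdot\ln\frac{8\,l(\eta)}{\epsilon'}\cdot\ln\frac{8\,l(C^{m-1},\eta_{m-1})}{\epsilon'}$. Writing $u_j:=\ln\big(8\,l(C^{j},\eta_j)/\epsilon'\big)$, this is a recursion $u_m\le c_0+(KL)\,u_{m-1}$ with $c_0=\ln(32/\epsilon')$, $K=\frac{\ln|\F|}{4\delta^2\epsilon'^2}$ and $L=\ln(8\,l(\eta)/\epsilon')$, all independent of $m$, and with $u_1=L$. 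Since $K\ge 1$ and $L\ge 1$, unrolling the recursion gives $u_m\le(O(KL))^m$, hence $\ln l(C^m,\eta^{*})\le u_m\le(O(KL))^m=(O(A))^m$ because $KL=O(A)$ up to the code-dependent constant; this matches the target $\exp\big((O(A))^m\big)$.

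The algorithmic statement follows from the same induction, now under the extra hypotheses that $m$ and $\epsilon$ are constants and $C$ is polynomial-time list decodable. By the second part of \thmref{thmone}, the list decoder for $C^m=C^{m-1}\otimes C$ runs in time $O\big(4|\F|^{E}\cdot T\cdot n_1n_2\big)$, where $E$ is the exponent above, $n_1n_2=n\cdot n^{m-1}=n^m$, and $T$ bounds the list-decoding time of $C_1=C$ and $C_2=C^{m-1}$; by induction $C^{m-1}$ is polynomial-time list decodable and $C$ is by hypothesis, so $T=\mathrm{poly}(n)$, while for constant $m,\epsilon$ the exponent $E=(O(A))^{m-1}$ is a constant and $4|\F|^{E}=O(1)$, so $T_m=O(1)\cdot T_{m-1}\cdot n^m$ unrolls to a polynomial (of degree depending on $m$). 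The only genuine obstacle is \thmref{thmone} itself, which we take as a black box; beyond it the work is bookkeeping, with three points demanding care: (i) choosing the orientation $C^{m-1}\otimes C$ so the $1/\delta_1^2$ factor stays $m$-uniform; (ii) splitting the slack $\epsilon$ over the $m$ levels, where a constant-factor split is enough precisely because the $\delta^{m-1}\eta$ term telescopes multiplicatively; and (iii) verifying the ``$C_2$ linear'' hypothesis of \thmref{thmone} at each step, which holds since tensor products of linear codes are linear.
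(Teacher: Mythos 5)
The paper gives no proof of this statement: it is quoted verbatim as \cite[Theorem 5.8]{GopGR09}, and the text only remarks that Gopalan et al.\ derived it from \thmref{thmone}. Your inductive derivation via $C^m = C^{m-1}\otimes C$ with $C_1=C$, $C_2=C^{m-1}$ is exactly that intended route and checks out, with the one small caveat that your $\epsilon$-budget shrinks geometrically down the recursion (level $j$ effectively works with $\epsilon/4^{m-j}$), so $K$ and $L$ are not literally level-independent as written; for constant $m$ this only perturbs the constant inside $(O(\cdot))^m$ and does not affect the claimed bounds.
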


The next fact is known due to the several constructions of list-decodable codes.
\begin{fact}\label{factthree}
There exist linear error-correcting codes of constant rate and constant relative distance that can be encoded in linear time and list-decoded in polynomial time.
\end{fact}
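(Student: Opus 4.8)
The plan is to assemble the statement from known components in the coding-theory literature, so the argument is essentially a construction together with pointers. The shortest route is to invoke the explicit codes of Guruswami and Indyk \cite{GI03,GI05}: they construct, over a fixed finite field, families of \emph{linear} codes of constant rate and constant relative distance that are encodable in linear time and list-decodable from a constant fraction of errors in polynomial (indeed near-linear) time. Taking such a family verbatim already establishes the Fact.

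If one prefers a more self-contained derivation, I would proceed by code concatenation. First fix an \emph{outer} code of constant rate over a growing alphabet that is list-decodable --- and more strongly list-recoverable / soft-decodable --- from a constant fraction of errors in polynomial time, for instance a Reed--Solomon or folded Reed--Solomon code equipped with a Guruswami--Sudan type list-decoding algorithm. Then fix an \emph{inner} linear code of constant rate and constant relative distance over the target field and list-decode it by exhaustive search; since the inner blocklength is $O(1)$ this costs constant time per block. The concatenated code is linear, has constant rate and constant relative distance (the product of the two), and is list-decodable in polynomial time by the standard two-level argument: produce a short list of candidate symbols for each inner block, then feed these lists into the list-recovery algorithm of the outer code.

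The point that requires care is \emph{linear-time encodability}, since the classical algebraic list-decodable codes (Reed--Solomon, folded RS, algebraic-geometry codes) are not known to admit linear-time encoders. Here I would use the expander- and graph-based techniques underlying \thmref{thm:spielman} and the Guruswami--Indyk constructions: either replace the outer code by an expander/graph-based code that is simultaneously linear-time encodable and supports list-recovery, or place an expander-based agreement-amplification layer on top of a constant-rate, linear-time encodable base code and verify that this layer preserves list-decodability. Inner encoding is trivially linear-time because each inner block has constant size, and the composition of two linear-time maps is again linear-time, so the overall encoder runs in linear time.

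The main obstacle is exactly this tension between the two requirements: the standard list-decodable codes are not linear-time encodable, while the expander codes that are linear-time encodable do not obviously support list-decoding. Reconciling the two is precisely the content of \cite{GI03,GI05}, which is why the cleanest complete proof of the Fact is to cite those works; any construction from scratch must reproduce the list-recovery analysis of the expander-based code, and that is the technically heaviest step.
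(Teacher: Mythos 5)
Your primary route --- citing the Guruswami--Indyk constructions \cite{GI03,GI05} of linear, constant-rate, constant-relative-distance codes that are linear-time encodable and polynomial-time list-decodable --- is exactly what the paper does: it offers no proof beyond asserting the fact is ``known due to the several constructions of list-decodable codes'' and pointing to \cite{GI03,GI05}. Your additional sketch of a concatenation-based derivation is a reasonable elaboration but is not needed; the citation alone matches the paper's treatment.
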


%\begin{theorem}[\cite{GI03}]\label{thmthree}
%For every $\delta > 0$, for every integer $q \geq 2$, there is an explicit family of $q$-ary codes which is linear time encodable and $( 1/q +\delta,O(1/\delta^3))$-list decodable in polynomial time by a randomized algorithm (the algorithm will output the correct list of codewords with overwhelming probability). The rate of the code family is $2^{-2^{O(\delta^{-3})}}$.
%\end{theorem}

We use the combination of \thmref{thmtwo}, Fact \ref{factthree}, \clmref{clm:encod}  and \corref{cor:main} to conclude the following corollary.

\begin{corollary}\label{cor:maintwo}
Let $\F$ be any field. For every constant $\epsilon > 0$ there exists a code $C' \subseteq \F^N$ such that
$C' = C^m$, where $C \subseteq \F^n$ is a linear code, $\rate(C) = \Omega(1)$, $\delta(C)=\Omega(1)$ and $C$ is $(\rho,L)$-list decodable in polynomial time.
\begin{itemize}
\item $C'$ is a $(N^{\epsilon},\alpha)$-strong LTC, where $\alpha = \alpha(\epsilon) > 0$ is a constant,
\item $C'$ is linear time encodable and list-decodable in polynomial time from the constant fraction of errors,
\item $\rate(C') \geq \Omega(1)$ and $\delta(C')= \Omega(1)$.
\end{itemize}
\end{corollary}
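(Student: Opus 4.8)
The plan is simply to assemble the ingredients that have already been collected. Concretely, one fixes a good list-decodable base code from Fact~\ref{factthree}, forms a suitable constant tensor power, and then reads off local testability from \corref{cor:main}, linear-time encodability from \clmref{clm:encod}, and list-decodability from \thmref{thmtwo}; the constant-rate and constant-distance conclusions are immediate from the standard tensor-product identities $\rate(C^m) = (\rate(C))^m$ and $\delta(C^m) = (\delta(C))^m$. The only genuine work is the bookkeeping: choosing the number of tensor factors $m$ and the slack parameter inside \thmref{thmtwo} so that every asserted quantity is simultaneously a positive constant.

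First I would invoke Fact~\ref{factthree} (over the field $\F$) to fix a linear code $C \subseteq \F^n$ with $\rate(C) = \Omega(1)$, $\delta := \delta(C) = \Omega(1)$, that is linear-time encodable and $(\rho,L)$-list decodable in polynomial time, where $\rho, L > 0$ are constants. Given the target $\epsilon > 0$, set $m := \max\{3, \lceil 2/\epsilon \rceil\}$ --- a constant depending only on $\epsilon$ --- let $N := n^m$, and put $C' := C^m \subseteq \F^N$.

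Next I would verify the three bullets. For rate and distance, $\rate(C') = (\rate(C))^m = \Omega(1)$ and $\delta(C') = \delta^m = \Omega(1)$, both constants since $m$ is constant. For local testability, \corref{cor:main} gives that $C'$ is an $(n^2, \alpha_m)$-strong LTC for a constant $\alpha_m = \alpha_m(m, \delta) > 0$; since $m \geq 2/\epsilon$ we have $n^2 = N^{2/m} \leq N^{\epsilon}$, and a $q$-query tester is trivially an $N^{\epsilon}$-query tester once $q \leq N^{\epsilon}$, so $C'$ is an $(N^{\epsilon}, \alpha)$-strong LTC with $\alpha := \alpha_m$ depending only on $\epsilon$ (and $\delta$). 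For linear-time encodability, \clmref{clm:encod}, applied with the constant $m$, shows that $C' = C^m$ is linear-time encodable. Finally, for list-decodability I would apply \thmref{thmtwo} with its slack parameter set to $\epsilon_0 := \tfrac12 \delta^{m-1}\rho$: then $C'$ is list-decodable up to error rate $\delta^{m-1}\rho - \epsilon_0 = \tfrac12 \delta^{m-1}\rho = \Omega(1)$, i.e.\ a constant fraction of errors, with list size $\exp\!\big((O(\ln(L/\epsilon_0)/\epsilon_0^2))^m\big)$, a constant because $m$, $\epsilon_0$, $L$ are all constants; and since $m$ is constant and $C$ is polynomial-time list decodable, the derived list-decoder for $C'$ runs in polynomial time.

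The step needing the most care --- though it is still elementary --- is making all of these constants hold at once: $m$ must be large enough that $n^2 \leq N^{\epsilon}$, yet a fixed constant so that $\alpha_m$, the list size, and the list-decoding running time from \thmref{thmtwo} do not blow up, and the internal slack $\epsilon_0$ must be small enough (in terms of $\delta$ and $m$) to keep the list-decoding radius $\delta^{m-1}\rho - \epsilon_0$ strictly positive while still bounded away from $0$. There is no deeper obstacle; the corollary is essentially a repackaging of the earlier statements.
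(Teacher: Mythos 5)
Your proposal is correct and follows exactly the route the paper intends: the paper offers no detailed proof of \corref{cor:maintwo}, stating only that it follows by combining \thmref{thmtwo}, Fact~\ref{factthree}, \clmref{clm:encod} and \corref{cor:main}, which is precisely the assembly you carry out (with the parameter bookkeeping made explicit).
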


\section{Proof of Theorem \ref{thm:main}}\label{sec:proofthmmain}
Throughout this paper we assume that $C \subseteq \F^n$ is a linear code. We shall consider an $m$-wise tensor product, i.e., $C^m \subseteq \F^{n^m}$. Note that the blocklength of $C^m$ is $n^m$. Throughout this paper we assume that $m \geq 3$ and for the case of $m=2$ we refer a reader to \cite{BV09,weaklysmooth,DSW06,nonrobust,nonrobustthree,nonrobusttwo,nonrobusttwo}).
We start this section by defining the concepts of points, lines and planes (some of the terms were defined following \cite{BS06}).

\subsection{Preliminary notations: Points, Lines and Planes}\label{sec:planeslines}
A point in such a code can be associated with an $m$-tuple $(i_1,i_2,...,i_m)$ such that $i_j \in [n]$.
Next we define an axis parallel line, or shortly, a line which can be associated with a subset of points.
For $b\in [m]$ and $i\in [n]$ we say that $l$ is a $(b,(i_1,i_2,...,i_{b-1},i_{b+1},...,i_m))$-line if
\[l = \set{(i_1,i_2,...,i_{b-1},i,i_{b+1},...,i_m) \ |\ \text{ \ for all } j\in [m]\setminus \set{b} \text{ \ we have \ } i_j = i_j}.\]
Note that $(b,(i_1,i_2,...,i_{b-1},i_{b+1},...,i_m))$-line is parallel to the $b$-th axis.
A line $l$ contains a point $p$ if $p\in l$. Note that a $(b,(i_1,i_2,...,i_{b-1},i_{b+1},...,i_m))$-line contains a point $p=(j_1,i_2,...,j_m)$ if for all $k \in [m]\setminus \set{b}$ we have $i_k = j_k$. Two (different) lines intersects on the point $p$ if both lines contain the point $p$.

We say that $pl$ is a $(b,i)$-plane if
\[pl = \set{(i_1,i_2,...,i_m) \ | \ i_b=i  \text{ \ and for all } j\in [m]\setminus \set{b} \text{ \ we have \ } i_j \in [n]}.\]

A $(b,i)$-plane contains a point $p=(i_1,i_2,...,i_m)$ if $i_b = i$, i.e., $b$-th coordinate of the point is $i$. A $(b,i)$-plane contains a line $l$ if it contains all points of the line. We say that two (different) planes are intersected if both planes contain at least one common point. Note that two (different) planes: $(b_1,i_1)$-plane and $(b_2,i_2)$-plane are intersected iff $b_1 \neq b_2$, moreover, they are intersected on all points
$p=(i_1,\ldots,i_m)$ such that $i_1 = i_{b_1}$ and $i_2 = i_{b_2}$, i.e., are intersected on $n^{m-2}$ points.

Assume that $pl_1$ is a $(b_1,i_1)$-plane and $pl_2$ is a $(b_2,i_2)$-plane such that $b_1 < b_2$ (in particular $b_1 \neq b_2$). Let $pl_1 \cap pl_2 = \set{(i_1,\ldots,i_m) \ |\ i_{b_1}=i_1, i_{b_2}=i_2}$ be an intersection of two planes and $C^m|_{pl_1 \cap pl_2}$ be a code $C^m$ restricted to the points in $pl_1 \cap pl_2$. Note that
$\delta(C^m|_{pl_1 \cap pl_2}) = \delta(C^{m-2}) = \delta(C)^{m-2}$.

Given a word $M \in \F^{n^m}$, $b\in [m]$ and $i \in [n]$ we let $M_{(b,i)}$ be a restriction of $M$ to the $(b,i)$-plane, i.e., to all points of the plane. We say that $M_{(b,i)}$ is a $(b,i)$-plane of $M$.
Similarly, for the point $p=(i_1,\ldots,i_m)$ let $M|_p$ be a restriction of $M$ to the point $p$ and for the line $l$ we let $M|_l$ be a restriction of $M$ to the line $l$. We say that $M|_l$ is a line $l$ of $M$.

\subsection{The proof itself}
Let $M \in \F^{n^m}$ be an input word. We prove that $\rho^{\D}(M) \geq \frac{(\delta(C))^{m-1}}{2m^2} \delta(M,C^m)$.

For every plane $pl$ of $M$ let $r(pl)$ be the closest codeword of $C^{m-1}$ to $M|_{pl}$ (if there are more than one such codewords fix anyone arbitrarily). Intuitively, the plane $pl$ of $M$ ``thinks'' that the symbols of $M|_{pl}$ should be changed to $r(pl)$. In this sense every plane of $M$ has its own ``opinion''. Then we have
\begin{equation}\label{eq}
\rho^{\D}(M) = \ex{pl \sim \D}{\delta(M|_{pl},r(pl))}.
\end{equation}

We say that the $(b_1,i_1)$-plane and the $(b_2,i_2)$-plane disagree on the point $p=(i_1,\ldots,i_m)$ if $(b_1,i_1)$-plane and $(b_2,i_2)$-plane are intersected, both contain the point $p$ and $r(pl_1)|_{p} \neq r(pl_2)|_{p}$. We say that two planes disagree on the line $l$ if both planes are intersected, both contain the line $l$ and $r(pl_1)|_l \neq r(pl_2)|_l$.

Note that if $(b_1,i_1)$-plane $pl_1$ and $(b_2,i_2)$-plane $pl_2$ are intersected and disagree on at least one point then letting $reg = pl_1 \cap pl_2$ we have $r(pl_1)|_{reg} \neq r(pl_2)|_{reg}$ and moreover, $\delta(r(pl_1)|_{reg}, r(pl_2)|_{reg}) \geq (\delta(C))^{m-2}$. This is true since $r(pl_1)|_{reg}, r(pl_2)|_{reg} \in C^{m-2}$ are non-equal codewords of $C^{m-2}$ and $\delta(C^{m-2}) = (\delta(C))^{m-2}$.

Let $E \in \F_2^{n^m}$ be a binary matrix such that $E|_p =1$ if at least two planes disagree on the point $p$, and otherwise $E|_p =0$. For the point $p$ we say that the point is almost fixed if $E|_p = 0$ but $p$ is contained in some plane $pl$ such that $r(pl)|_p \neq M|_p$. Intuitively, a point $p$ is almost fixed if all planes containing this point agree on this point but ``think'' that its value in $M$ ($M|_p$) should be changed (to $r(pl)|_p$).

We let $ToFix = \set{p=(i_1,i_2,\ldots,i_m) \ |\ p \text{ is almost fixed}}$ and let $NumToFix = |ToFix|$.

\begin{proposition}\label{prop:boundrob}
It holds that $\rho^{\D}(M) \geq \frac{\wt(E)}{m} + \frac{NumToFix}{n^m}$.
\end{proposition}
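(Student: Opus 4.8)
The plan is to lower-bound $\rho^{\D}(M) = \ex{pl \sim \D}{\delta(M|_{pl}, r(pl))}$ by charging, for each plane $pl$, the disagreement entries and the almost-fixed points contained in $pl$, and then summing over the uniform choice of plane. The starting observation is that for a fixed plane $pl$ the quantity $\delta(M|_{pl}, r(pl))$ counts the fraction of points $p$ in $pl$ on which $r(pl)|_p \neq M|_p$. I would split those points into two groups: (i) points $p$ with $E|_p = 1$, i.e.\ points on which at least two planes disagree, and (ii) points $p$ with $E|_p = 0$ on which $r(pl)|_p \neq M|_p$, which are exactly the almost-fixed points lying in $pl$. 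So $\delta(M|_{pl}, r(pl)) \cdot n^{m-1} \geq |\{p \in pl : E|_p = 1, r(pl)|_p \neq M|_p\}| + |\{p \in pl : p \in ToFix\}|$.

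Next I would take the expectation over $pl \sim \D$. The plane tester picks $b \in [m]$ uniformly and $i \in [n]$ uniformly, so each point $p = (i_1,\dots,i_m)$ lies in exactly one plane for each choice of $b$ (namely the $(b, i_b)$-plane), hence $p$ lies in exactly $m$ of the $nm$ planes, and $\pr{pl \sim \D}{p \in pl} = 1/n$. For the almost-fixed contribution this already gives, by linearity of expectation, $\ex{pl}{|\{p \in pl : p \in ToFix\}|/n^{m-1}} = NumToFix/n^m$, which is the second term. The first term is the one needing care: I cannot simply say every $E$-point contributes from all $m$ of its planes, because a plane $pl$ with $E|_p=1$ might nevertheless have $r(pl)|_p = M|_p$. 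Here the key combinatorial fact is that if $E|_p = 1$ then at least two distinct planes through $p$ assign different values at $p$, so at most one of the (at most $m$) planes through $p$ can have $r(pl)|_p = M|_p$; hence at least $m - 1$ of them — and in particular at least one, but the strong claim is $\geq m-1$ ... actually I only need the weaker bound that summing $1/(m)$ over... let me reconsider. I want the clean bound $\ex{pl}{(\text{number of }E\text{-points }p\in pl\text{ with }r(pl)|_p\neq M|_p)/n^{m-1}} \geq \wt(E)/m$. Since each $E$-point $p$ lies in exactly $m$ planes and is counted (with $r(pl)|_p \neq M|_p$) by at least $m-1 \geq m/2$... hmm, $m-1$ is not $\geq m$. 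The honest route: each $E$-point $p$ contributes to at least one plane through it with $r(pl)|_p \neq M|_p$ — in fact to at least $m-1$, but even "at least one out of the $nm$ planes, weighted by $1/(nm)$ per plane, times $n^{m-1}$ normalization" gives contribution $\geq 1/(m)$ per $E$-point after dividing, no: $\frac{1}{nm}\cdot\frac{1}{n^{m-1}}\cdot$(count). I would set this up as $\ex{pl}{\#\{\dots\}/n^{m-1}} = \frac{1}{nm}\sum_{pl}\#\{p\in pl : E|_p=1, r(pl)|_p\neq M|_p\}/n^{m-1} = \frac{1}{nm\,n^{m-1}}\sum_{p : E|_p=1}\#\{pl\ni p : r(pl)|_p\neq M|_p\} \geq \frac{1}{n^m m}\sum_{p:E|_p=1}1 = \frac{\wt(E)}{m}$, using that the inner count is $\geq 1$ (indeed $\geq m-1$) for every $E$-point. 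Adding the two bounds gives the proposition.

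The main obstacle, and the only place real content enters, is the step "$E|_p = 1 \Rightarrow$ at least one plane $pl$ through $p$ has $r(pl)|_p \neq M|_p$." This is immediate from the definition of $E$: $E|_p = 1$ means two intersecting planes $pl_1, pl_2$ both contain $p$ with $r(pl_1)|_p \neq r(pl_2)|_p$, so they cannot both agree with $M|_p$, hence at least one of them disagrees with $M|_p$ at $p$; thus the inner count is $\geq 1$, which is all the displayed bound needs. (One should double check the edge bookkeeping: a point $p=(i_1,\dots,i_m)$ lies in the $(b,i_b)$-plane for each $b\in[m]$, so in exactly $m$ planes among all $nm$ of them, and $\pr{pl\sim\D}{p\in pl}=\frac1n$ — these are the only facts about $\D$ used.) Everything else is linearity of expectation and the partition of the support of $M|_{pl}-r(pl)$ into $E$-points and almost-fixed points, so I expect no further difficulty.
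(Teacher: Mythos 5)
Your proof is correct and follows essentially the same route as the paper's: partition the disagreement points of each plane into $E$-points and almost-fixed points, observe that each point lies in exactly $m$ planes, that each $E$-point forces a disagreement with $M|_p$ on at least one of its planes, and that each almost-fixed point forces one on all $m$ of its planes. The only slip is your parenthetical claim that at least $m-1$ of the planes through an $E$-point disagree with $M|_p$ (false in general: the two mutually disagreeing planes only rule out both of them agreeing with $M|_p$, while the remaining $m-2$ planes may all assign exactly $M|_p$), but since your displayed bound uses only ``at least one,'' this does not affect the argument.
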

\begin{proof}
Equation \ref{eq} says that $\rho^{\D}(M)$ is a relative distance of a typical plane of $M$ (which is a word in $\F^{n^{m-1}}$) from $C^{m-1}$. Note that for every point $p=(i_1,\ldots,i_m)$: if $E|_p \neq 0$ then $p\notin NeedToFix$. That means for every point $p$ at most one condition is satisfied: $E|_p \neq 0$ or $p\in NeedToFix$.

Note also that for every point $p \in NeedToFix$, for all planes $pl$ of $M$ we have $(M|_{pl})|_p \neq r(pl)|_p$. Now, every point $p$ is contained in $m$ different planes. Hence if $E|_p \neq 0$ then for at least one plane $pl$ (of $m$ planes containing the point $p$) it holds that $r(pl)|_{p} \neq M|_{p}$.

Hence a relative distance between a typical plane ($pl$) of $M$ and $r(pl)$ is at least $\frac{\wt(E)}{m} + \frac{NumToFix}{n^m}$.
\end{proof}

Next we define an important concept of ``heavy planes (lines)'' in the inconsistency matrix $E$. Intuitively, a heavy plane (line) of the matrix $E$ is a plane (line) where many inconsistencies occur, i.e., many non-zero symbols.

\begin{definition}[Heavy lines and planes]\label{def:heavyplanes}
A line $l$ of $E$ is called heavy if $|E|_{l}| \geq \delta(C) \cdot n$.
A plane $(b,i)$ of $E$ is called heavy if $|E|_{(b,i)}| \geq \frac{(\delta(C) \cdot n)^{m-1}}{2}$.
\end{definition}

Lemma \ref{lem:main} is our main observation in the proof of Theorem \ref{thm:main}. It says that any non-zero element of $E$ is located in some heavy plane of $E$.

\begin{lemma}[Main Lemma]\label{lem:main}
Let $p=(i_1,i_2,\ldots,i_m)$ be a point such that $E_p \neq 0$. Then $p$ is contained in some heavy plane of $E$.
\end{lemma}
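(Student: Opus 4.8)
The plan is to show that if $E_p \neq 0$ for a point $p = (i_1, \ldots, i_m)$, then at least one of the $m$ planes through $p$ must be heavy, i.e., must contain at least $\frac{(\delta(C)n)^{m-1}}{2}$ nonzero entries of $E$. The key observation driving the proof is the one already isolated in the text: whenever two planes $pl_1, pl_2$ of $M$ are intersected and disagree on $p$, their closest codewords $r(pl_1)$ and $r(pl_2)$ differ on the \emph{entire} intersection region $reg = pl_1 \cap pl_2$, because $r(pl_1)|_{reg}$ and $r(pl_2)|_{reg}$ are two distinct codewords of $C^{m-2}$ and hence differ in a $\geq (\delta(C))^{m-2}$ fraction of the $n^{m-2}$ points of $reg$. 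So disagreement \emph{propagates}: one disagreeing point forces many disagreeing points in nearby intersection regions, and a disagreeing point in $reg$ has $E = 1$ provided at least two planes disagree there — which is automatic since $pl_1$ and $pl_2$ themselves disagree there.

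**The main argument.** Since $E_p \neq 0$, by definition there exist (at least) two planes through $p$, say a $(b_1, i_{b_1})$-plane $pl_1$ and a $(b_2, i_{b_2})$-plane $pl_2$ with $b_1 \neq b_2$, that disagree on $p$. I would fix these two. Now I want to find a third axis direction $b_3 \notin \{b_1, b_2\}$ (here is where $m \geq 3$ is essential) and sweep: for each value $j \in [n]$ in direction $b_3$, consider the $(b_3, j)$-plane $q_j$. Each such $q_j$ intersects both $pl_1$ and $pl_2$. The regions $q_j \cap pl_1$ and $q_j \cap pl_2$ are each $(m-2)$-dimensional and share the $(m-3)$-dimensional slice $q_j \cap pl_1 \cap pl_2$. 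On the full region $pl_1 \cap pl_2$ (of dimension $m-2$) we already know $r(pl_1)$ and $r(pl_2)$ disagree on a $\geq (\delta(C))^{m-2}$ fraction of points; restricting to the various slices $q_j \cap pl_1 \cap pl_2$ and summing over $j \in [n]$ recovers exactly those $(\delta(C))^{m-2} \cdot n^{m-2}$ disagreeing points. By averaging, for at least a $\geq \delta(C)$ fraction of indices $j$, the slice $q_j \cap pl_1 \cap pl_2$ contains many disagreeing points. Fix such a "good" $j$. On that slice, $r(pl_1)$ and $r(pl_2)$ disagree, so on the bigger region $q_j \cap pl_1$ the word $r(q_j)$ cannot agree with both; hence $r(q_j)$ disagrees with at least one of $r(pl_1), r(pl_2)$ — say $r(pl_1)$ — somewhere in $q_j \cap pl_1$. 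But $r(q_j)$ and $r(pl_1)$ are distinct codewords of $C^{m-2}$ on the region $q_j \cap pl_1$, so they disagree on a $\geq (\delta(C))^{m-2}$ fraction of its $n^{m-2}$ points. At each such point both $q_j$ and $pl_1$ disagree, so $E = 1$ there. Summing the count $(\delta(C))^{m-2} n^{m-2}$ over all the $\geq \delta(C) \cdot n$ good values of $j$ gives $\geq (\delta(C))^{m-1} n^{m-1}$ nonzero entries of $E$. The factor $\frac12$ and the bookkeeping about \emph{which} plane ($pl_1$ or $pl_2$ or $q_j$ itself) ends up heavy provides the slack: one of $pl_1$, $pl_2$, or the sweeping-direction planes is forced to be heavy, and since all of these pass through $p$ (or at least, I should make sure the heavy plane I produce is one of the $m$ planes through $p$), we are done.

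**Making sure the heavy plane passes through $p$.** The subtle point is that the sweeping planes $q_j$ do not in general pass through $p$, so I cannot simply conclude "$q_j$ is heavy, done." I expect the correct conclusion is that $pl_1$ or $pl_2$ is heavy. The argument should be reorganized: count, for the plane $pl_1$, the number of points $p'$ in $pl_1$ with $E_{p'} \neq 0$. For each good $j$ above we produced $\geq (\delta(C))^{m-2} n^{m-2}$ such points lying in $q_j \cap pl_1 \subseteq pl_1$, and these sets are disjoint for distinct $j$, giving $\geq \delta(C) \cdot n \cdot (\delta(C))^{m-2} n^{m-2} = (\delta(C))^{m-1} n^{m-1} \geq \frac{(\delta(C) n)^{m-1}}{2}$, so $pl_1$ is heavy (and $pl_1$ does pass through $p$). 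One has to be slightly careful since the "at least one of $r(pl_1), r(pl_2)$ disagrees with $r(q_j)$" dichotomy might put the witnessed points in $pl_2$ for some $j$ and in $pl_1$ for others; then pigeonhole says one of $pl_1, pl_2$ collects $\geq \frac12 (\delta(C))^{m-1} n^{m-1}$ of them, which is exactly the threshold in \defref{def:heavyplanes}. This case split — keeping the accumulated nonzero $E$-entries inside a \emph{single} plane through $p$ — is the main obstacle, and it is precisely why the heaviness threshold carries the $\frac12$. I would write the averaging step carefully and then invoke pigeonhole over the two planes $pl_1, pl_2$ to land exactly on the stated bound.
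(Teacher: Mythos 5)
Your proof follows the paper's argument almost exactly: fix the two planes $pl_1,pl_2$ that disagree on $p$, sweep with planes $q_j$ in a third direction $b_3$ (this is where $m\ge 3$ enters, as you note), observe that each relevant $q_j$ must disagree with at least one of $pl_1,pl_2$ (since $r(q_j)$ cannot agree with both where they differ), apply pigeonhole so that one of $pl_1,pl_2$ collects at least half of these disagreements, and use $\delta(C^{m-2})=(\delta(C))^{m-2}$ on each of the pairwise disjoint intersection regions $q_j\cap pl_1$ to reach the threshold $\frac{(\delta(C)n)^{m-1}}{2}$. Your closing discussion of why the heavy plane must be $pl_1$ or $pl_2$ rather than a sweeping plane, and of where the factor $\frac{1}{2}$ comes from, is exactly the paper's bookkeeping.

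The one step whose justification does not hold as written is the claim that ``by averaging'' at least a $\delta(C)$ fraction of the indices $j$ yield a slice $q_j\cap pl_1\cap pl_2$ containing a disagreement. Averaging the $(\delta(C))^{m-2}n^{m-2}$ disagreeing points of $reg=pl_1\cap pl_2$ over the $n$ slices, each of size $n^{m-3}$, only guarantees $(\delta(C))^{m-2}n$ nonempty slices; feeding that into the rest of your count gives $\frac{1}{2}(\delta(C))^{2m-4}n^{m-1}$ nonzero entries of $E$ in the chosen plane, which falls below the heaviness threshold $\frac{1}{2}(\delta(C))^{m-1}n^{m-1}$ for every $m\ge 4$ (it matches only at $m=3$). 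The correct justification --- and the paper's --- is structural rather than statistical: restrict the nonzero difference $r(pl_1)|_{reg}-r(pl_2)|_{reg}\in C^{m-2}$ to the line through $p$ parallel to axis $b_3$. This restriction is a codeword of $C$ that is nonzero (it is nonzero at $p$ itself), hence has at least $\delta(C)\cdot n$ nonzero coordinates, and these lie in $\delta(C)\cdot n$ distinct sweeping planes $q_j$. With that one substitution your argument is complete and coincides with the paper's proof.
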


The proof of Lemma \ref{lem:main} is postponed to Section \ref{sec:proofmainlem}.
\corref{cor:mainlem} shows that it is sufficient to remove at most $\wt(E) \cdot (\delta(C))^{m-1}/2 \cdot \frac{n}{2}$ planes from $E$ to get a zero submatrix.

\begin{corollary}\label{cor:mainlem}
There exists $S_1,\ldots,S_m \subseteq [n]$ such that $n-|S_1|+ n-|S_2|+ \ldots +n-|S_m| \leq \frac{2|E|}{(\delta(C)n)^{m-1}} \cdot m$ and letting $S = S_1 \times S_2 \times \ldots \times S_m$ we have $E|_{S}=0$.
\end{corollary}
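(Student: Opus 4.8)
The plan is to iteratively peel off heavy planes of $E$ until nothing nonzero remains, and then to account for how many coordinates we remove along each of the $m$ axes. By \lemref{lem:main}, every nonzero point of $E$ lies in some heavy plane, i.e., a plane $(b,i)$ with $|E|_{(b,i)}| \geq \frac{(\delta(C)n)^{m-1}}{2}$. So the following process terminates: while $E$ (restricted to the coordinates not yet removed) is nonzero, pick any heavy plane $(b,i)$ of the current restriction, record the axis $b$ and the index $i$, and delete index $i$ from axis $b$ (i.e., remove that plane). When the loop halts, the surviving submatrix $S = S_1 \times \cdots \times S_m$, where $S_b \subseteq [n]$ is the set of indices along axis $b$ that were never removed, satisfies $E|_S = 0$, since otherwise \lemref{lem:main} would hand us another heavy plane and the loop would not have stopped.

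The counting step is the heart of the argument. Each time we delete a heavy plane, it carried at least $\frac{(\delta(C)n)^{m-1}}{2}$ nonzero entries of the current submatrix, and these entries are now gone from all future submatrices; since heaviness is measured with respect to the \emph{current} (already shrunk) submatrix, which is a sub-object of the original $E$, a heavy plane of the current submatrix still contains at least $\frac{(\delta(C)n)^{m-1}}{2}$ entries that were nonzero in the original $E$ and have not been counted before. Hence the total number of planes removed, which is exactly $\sum_{b=1}^m (n - |S_b|)$, is at most $\frac{|E|}{(\delta(C)n)^{m-1}/2} = \frac{2|E|}{(\delta(C)n)^{m-1}}$. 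This already gives the bound $\sum_b (n-|S_b|) \leq \frac{2|E|}{(\delta(C)n)^{m-1}}$, which is even stronger than the claimed $\frac{2|E|}{(\delta(C)n)^{m-1}} \cdot m$; I would simply state the weaker bound to match the later use, or keep the sharper one and remark that the extra factor of $m$ is not needed.

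The one point requiring a little care — and the place I'd expect a referee to push back — is the monotonicity claim I used implicitly: that a plane which is heavy in a restricted submatrix still accounts for the claimed number of distinct, not-previously-charged nonzero entries of the original $E$. This is fine because the entries in a heavy plane of the current submatrix are (a) nonzero in the original $E$ (restriction only deletes entries, never creates them), and (b) still present, hence never deleted and hence never charged to an earlier heavy plane; and different iterations charge disjoint sets of entries because once a plane is removed its entries are absent from all later submatrices. So the charging is injective into $\supp(E)$ and the bound follows. I would also note the base case: if $E = 0$ to begin with, take $S_b = [n]$ for all $b$ and the inequality holds trivially since $|E| = 0$.
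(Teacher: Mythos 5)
There is a genuine gap in the termination step of your peeling argument, and it is exactly at the point you flagged as needing care. \lemref{lem:main} is a statement about the \emph{original} matrix $E$: if $E_p \neq 0$ then $p$ lies in a plane containing at least $\frac{(\delta(C)n)^{m-1}}{2}$ nonzero entries \emph{of the original $E$}. Your loop, however, removes a plane that is heavy \emph{with respect to the current restriction}, and you justify termination by saying that otherwise \lemref{lem:main} ``would hand us another heavy plane.'' It does not: after some planes have been deleted, a surviving nonzero point $p$ is still contained in a plane that was heavy in the original $E$ (and that plane cannot have been deleted, else $p$ would be gone), but most of that plane's nonzero entries may already have been removed in earlier iterations, so it need not be heavy in the current restriction. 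The lemma gives no guarantee that the restricted matrix has any heavy plane of its own --- its proof uses the global structure of $E$ (the codes $C^{m-2}$ and $C$ on full intersection regions and lines), which is destroyed by restriction, since $C|_{S_i}$ can have much smaller distance. So the loop as stated can halt with a nonzero submatrix.

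The two halves of your argument pull in opposite directions: to terminate you must measure heaviness against the original $E$ (remove any not-yet-deleted plane that was heavy in $E$), but then the disjoint-charging count fails, because the plane you delete may contribute far fewer than $\frac{(\delta(C)n)^{m-1}}{2}$ fresh entries; to make the charging injective you must measure heaviness against the current restriction, but then termination is unjustified. This is also why your claimed improvement by a factor of $m$ is unsubstantiated. The paper sidesteps the issue with a one-shot argument: let $\overline{S_b}$ consist of all $i$ with $(b,i)$ heavy in the original $E$, remove all of these planes simultaneously (by \lemref{lem:main} nothing nonzero survives), and bound their number by noting that each nonzero entry of $E$ lies in exactly $m$ planes, hence is counted at most $m$ times --- which is precisely where the factor $m$ in the statement comes from. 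Your proof can be repaired by switching to this one-shot removal, at the cost of giving up the sharper bound.
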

\begin{proof}
Let $HeavyPlanes = \set{(b,i) \ |\ (b,i) \text{ \ is a heavy plane}}$ to be a subset of pairs associated with heavy planes. For $b \in [m]$ let $\overline{S_b} = \set{i \in [n] \ | \ (b,i) \in HeavyPlanes}$ and $S_b = [n] \setminus \overline{S_b}$.

We claim that $|HeavyPlanes| \leq \frac{2|E|}{(\delta(C)\cdot n)^{m-1}} \cdot m$. This is true since every heavy plane contains at least $\frac{(\delta(C)\cdot n)^{m-1}}{2}$ non-zero elements of $E$ and the total number of non-zero elements of $E$ is $|E|$. Furthermore, every non-zero element of $E$ is contained in at most $m$ (heavy) planes. Thus $n-|S_1|+ n-|S_2|+ \ldots +n-|S_m| = \sum_{b\in [m]} |\overline{S_b}| \leq \frac{2|E|}{(\delta(C)n)^{m-1}} \cdot m$.

Now, note that Lemma \ref{lem:main} implies that for every point $p=(i_1,i_2,\ldots,i_m)$ such that $E|_{p} \neq 0$ is contained in some heavy plane, i.e., in some plane from $HeavyPlanes$. Hence if all heavy planes are removed from $E$ we obtain a zero submatrix. So, it follows that $E|_{S}=0$.
\end{proof}

\propref{prop:decod} says that if after removing small fraction of planes from $M$ we obtain a submatrix that is close to the legal submatrix then $M$ is close to $C^{m}$.

\begin{proposition}\label{prop:decod}
Let $S_1, S_2, ..., S_m \subseteq [n]$ be such that $n - |S_1| + n - |S_2| + ... + n-|S_m| \leq \tau n < \delta(C) \cdot n$ and let $S = S_1 \times S_2 \times \ldots \times S_m$. Let $C' = C|_{S_1} \otimes C|_{S_2} \otimes \ldots \otimes C|_{S_m}$. Recall that $M|_S$ is a submatrix of $M$ obtained by removing at most $\tau$-fraction of planes. Assume that $\dist{M|_S,C'} \leq \alpha \cdot n^m$. Then $\delta(M, C^m) \leq \tau + \alpha$.
\end{proposition}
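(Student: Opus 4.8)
The goal is to show that if the restricted submatrix $M|_S$ is $\alpha$-close to $C' = C|_{S_1} \otimes \cdots \otimes C|_{S_m}$, then the whole word $M$ is $(\tau+\alpha)$-close to $C^m$. The overall strategy is a two-step ``correct then extend'' argument: first replace $M|_S$ by a nearby codeword $c'$ of $C'$, paying at most $\alpha \cdot n^m$ errors; then argue that $c'$, which is a valid codeword on the product grid $S$, extends \emph{uniquely} to a codeword of $C^m$ on the full grid $[n]^m$, and that this extension differs from $M$ only on the $S$-coordinates (where it agrees with $c'$, so at distance $\le \alpha n^m$) plus possibly on the at most $\tau n^{m-1}$-fraction of points lying outside $S$ (contributing at most $\tau \cdot n^m$). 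Summing, $M$ is within $(\tau+\alpha)n^m$ of the extended codeword, which is exactly the claim.

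The first step is immediate from the hypothesis: pick $c' \in C'$ with $\dist{M|_S, c'} \le \alpha n^m$. The heart of the argument is the extension step, and this is where the condition $\tau n < \delta(C)\cdot n$ gets used. I would argue as follows. Since $C|_{S_b}$ is $C$ punctured at fewer than $\delta(C)\cdot n$ coordinates, the projection map $C \to C|_{S_b}$ is injective (two codewords of $C$ agreeing on $S_b$ would differ on fewer than $d$ coordinates, hence be equal), so every word of $C|_{S_b}$ has a \emph{unique} preimage in $C$; equivalently $\dim C|_{S_b} = \dim C$. Tensoring, the projection $C^m \to C'$ is a bijection, so $c'$ has a unique preimage $\hat c \in C^m$, and $\hat c|_S = c'$. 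I would make this concrete coordinate-by-coordinate: working one axis at a time, each row (resp. hyperplane-slice) of $c'$ along axis $b$ lies in $C|_{S_b}$ and extends uniquely to a codeword of $C$; doing this successively along $b = 1, 2, \ldots, m$ produces $\hat c \in C^m$ restricting to $c'$ on $S$. (One should check the successive extensions are consistent — this follows because at each stage the partially-extended object still has all its axis-$b'$ lines in $C$ for the already-processed axes, and uniqueness of the extension along the new axis preserves this.)

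Putting it together: $\dist{M, \hat c} \le \dist{M|_S, \hat c|_S} + |[n]^m \setminus S| = \dist{M|_S, c'} + |[n]^m \setminus S| \le \alpha n^m + (n-|S_1|+\cdots+n-|S_m|)n^{m-1} \le \alpha n^m + \tau n^m$, where the bound $|[n]^m \setminus S| \le (\sum_b (n-|S_b|)) n^{m-1}$ is a union bound over the $m$ axes. Dividing by $n^m$ gives $\delta(M, C^m) \le \tau + \alpha$. The main obstacle is making the uniqueness-of-extension argument fully rigorous in the $m$-dimensional setting — i.e., verifying that the iterated axis-by-axis extension is well-defined and really lands in $C^m$ — but this is a standard property of tensor codes punctured below the minimum distance, and the inequality $\tau < \delta(C)$ is exactly what guarantees each one-dimensional extension exists and is unique.
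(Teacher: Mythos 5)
Your proposal is correct and follows essentially the same route as the paper: pick a codeword $c'\in C'$ within $\alpha n^m$ of $M|_S$, use the fact that $\sum_b(n-|S_b|)<\delta(C)\cdot n$ makes each projection $C\to C|_{S_b}$ (hence $C^m\to C'$) a bijection so that $c'$ extends uniquely to $\hat c\in C^m$, and then charge at most $\tau n^m$ additional errors for the points outside $S$. The paper cites this unique-extension claim from Ben-Sasson--Sudan and argues it by dimension counting exactly as you do, so no further comparison is needed.
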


The proof of Proposition \ref{prop:decod} appears in Section \ref{sec:proofprop}. Let us prove Theorem \ref{thm:main}.

\begin{proof}[Proof of Theorem \ref{thm:main}]
By Proposition \ref{prop:boundrob} we have $\rho^{\D}(M) \geq \frac{\wt(E)}{m} + \frac{NumToFix}{n^m}$.
If $\wt(E) \geq \frac{(\delta(C))^{m}}{2 m}$ then we are done. Otherwise, assume that $\wt(E) < \frac{(\delta(C))^{m}}{2 m}$.

Corollary \ref{cor:mainlem} implies that it is sufficient to remove at most
$\frac{2|E|}{(\delta(C)n)^{m-1}} \cdot m < \delta(C) \cdot n$ planes from $E$ to get a zero submatrix.
Proposition \ref{prop:decod} implies that $\delta(M,C^m) \leq \frac{2\wt(E)}{(\delta(C))^{m-1}} \cdot m + \frac{NumToFix}{n^m}$.

Let $\beta = \frac{2m^2}{(\delta(C))^{m-1}}$. Then $\rho^{\D}(M) \cdot \beta \geq (\frac{\wt(E)}{m} + \frac{NumToFix}{n^m}) \cdot \beta \geq \delta(M,C^m)$ and $\rho^{\D}(M) \geq \frac{(\delta(C))^{m-1}}{2m^2} \cdot \delta(M,C^m)$.
\end{proof}

\subsubsection{Proof of Main Lemma \ref{lem:main}}\label{sec:proofmainlem}
In this section we prove Lemma \ref{lem:main}.

\begin{proof}[Proof od Main Lemma \ref{lem:main}]
By definition of $E$ we know that there are (at least) two planes that disagree on the point $p$.
Assume without loss of generality (symmetry) that the planes $p_1 = (1,i_1)$ and $p_2 = (2,i_2)$ disagree on the point $p$. We will prove that either $p_1$ is a heavy plane or $p_2$ is a heavy plane.

Consider the intersection of $p_1$ and $p_2$, i.e., $reg=pl_1 \cap pl_2 = \set{(i_1,i_2, j_3,j_4,\ldots,j_m) \ |\ j_k \in [n]}$. Note that $p\in reg$. Let $ln$ be a line, which is parallel to the axis $3$ and contains a point $p$ (recall that $m \geq 3$). Then the planes $p_1$ and $p_2$ disagree on this line (since they disagree on the point $p$ contained in the line $ln$), i.e., $r(p_1)|_{ln} \neq r(p_2)|_{ln}$. But $r(p_1)|_{ln}, r(p_2)|_{ln} \in C$ by definition. This implies that $\dist{r(p_1)|_{ln}, r(p_2)|_{ln}}\geq \delta(C) \cdot n$, i.e., for at least $\delta(C) \cdot n$ points $p\in ln$ it holds that $r(p_1)|_{p} \neq r(p_2)|_{p}$.

Let $BadPoints = \set{p \in ln \ |\ p_1\text{ and }p_2 \text{ \ disagree on \ }p}$. Note that $|BadPoints| \geq \delta(C) \cdot n$. Note that $|BadPoints| \geq \delta(C) \cdot n$.
Let $BadPlanes = \set{(3,i)-\text{plane} \ |\  i\in [n], \exists p \in BadPoints \text{ \  s.t. \ }\ p\in (3,i)-\text{plane}}$.
Note that $|BadPlains| \geq \delta(C) \cdot n$.

We claim that for every $plane \in BadPlanes$ we have that either $plane$ disagrees with $p_1$ on some point $p \in BadPoints$ or with $p_2$ on some point $p \in BadPoints$.
Hence at least one of $p_1,p_2$ disagrees with at least $\frac{1}{2} \cdot |BadPlanes| \geq \frac{1}{2} \cdot \delta(C) n$ planes from $BadPlanes$. Without loss of generality assume that $p_1$ disagrees with at least $\frac{1}{2} \cdot \delta(C) \cdot n$ planes from $BadPlanes$.

Let $BadPlanes_{p_1} = \set{plane \in BadPlanes \ |\ plane \text{ \ disagrees with \ }p_1}$. Note that all planes from $BadPlanes$ are non-intersecting and thus all planes from $BadPlanes_{p_1}$ are non-intersecting. Every plane $pl \in BadPlanes_{p_1}$ disagrees with the plane $p_1$ on some point and hence disagree on at least $(\delta(C) n)^{m-2}$ points in their intersection region ($pl \cap p_1$) since $r(pl)|_{pl \cap p_1} \neq r(p_1)|_{pl \cap p_1} \in C^{m-2}$.

Let $total = \set{p=(i_1,j_2,\ldots,j_m) \ |\ \exists plane \in BadPlanes_{p_1} \text{\ s.t.\ } p \in p_1 \cap plane,  r(plane)|_p \neq r(p_1)|_p}$. We have $|total| \geq (\delta(C) n)^{m-2} \cdot \frac{\delta(C)\cdot n}{2} = \frac{(\delta(C) \cdot n)^{m-1}}{2}$ since every intersection region (as above) contains at least $(\delta(C) n)^{m-2}$ inconsistency points and there are at least $\frac{1}{2} \cdot \delta(C) \cdot n$ such regions.
We stress that we do not count any inconsistency point more than once, since the planes in $BadPlanes_{p_1}$ are non-intersecting.

Hence the plane $p_1$ disagree with other planes in at least $\frac{(\delta(C)\cdot n)^{m-1}}{2}$ points (on the plane). Thus $E|_{p_1}$ has at least $\frac{(\delta(C)\cdot n)^{m-1}}{2}$ non-zero symbols.
We conclude that $p_1$ is a heavy plane of $E$ and the point $p$ is contained in the plane $p_1$.
\end{proof}

\subsubsection{Proof of Proposition \ref{prop:decod}}\label{sec:proofprop}
In this section we prove Proposition \ref{prop:decod}.

\begin{proof}[Proof of Proposition \ref{prop:decod}]
Note that for every $i\in [n]$ we have $|S_i|> n - \delta(C) \cdot n$.
The following simple claim was proven in \cite[Proposition 3.1]{BS06}. For the sake of completeness we provide its proof.

Every codeword $c'$ of $C'$ can be extended to a unique codeword $c$ of $C^m$.
To see this note that the projection of $C$ to $C|_{S_i}$ is bijective. It is surjective because it is a projection, and it is injective because $|S_i| > n - \dist{C}$. So, the projection of $C$ to $C'$ is bijection, because both codes are of dimension $(\dim(C))^m$. Thus, every word in $C'$ has a unique preimage in $C$.

We turn to prove Proposition \ref{prop:decod}.
We know that $M$ can be modified in at most $\alpha$-fraction of points $p \in S$ to get
$M|_S \in C'$. Then, by the claim above, $M$ can be modified (outside the submatrix $M|_S$) to a codeword of $C^{m}$ by changing at most $\tau$-fraction of symbols (since all symbols outside the submatrix $M|_S$ are at most $\tau$-fraction of all symbols). We conclude that $\delta(M, C^m) \leq \tau + \alpha$.
\end{proof}

\section{Proofs of Auxiliaries Corollaries and Claims}\label{sec:aux}

In this section we first proof \corref{cor:test}.

\begin{proofsk}
For $i \geq 3$ let $\D_i$ be the plane tester for the code $C^i$. Note that the tester $\D_m$ returns a local view that is a candidate to be in the code $C^{m-1}$. Hence $\D_{m-1}$ can be invoked on the local view of $\D_m$, etc.
So, the testers $\D_m, \D_{m-1},\ldots, \D_3$ can be composed to result in a $n^2$-query tester $\D_{comp}$ for the code $C^m$.

The robustness of the composed tester will be $\rho^{\D_{comp}}(C^m) \geq \rho^{\D_{m}}(C^m) \cdot \rho^{\D_{m-1}}(C^{m-1}) \cdot \ldots \cdot \rho^{\D_{3}}(C^3)$. To see this let $w\in \F^{n^m}$ be a word such that
$\delta(w,C^m) = \delta$. Then the local view of the tester $\D_m$ is expected to be $\rho^{\D_{m}}(C^m) \cdot \delta$ far from $C^{m-1}$. When $\D_{m-1}$ will be invoked, its local view will be $\rho^{\D_{m}}(C^m) \cdot \rho^{\D_{m-1}}(C^{m-1}) \cdot \delta$ far from $C^{m-2}$, etc.
Finally, the local view of $\D_3$ will be $(\rho^{\D_{m}}(C^m) \cdot \rho^{\D_{m-1}}(C^{m-1}) \cdot \ldots \cdot \rho^{\D_{3}}(C^3)) \cdot \delta$ far from $C^2$.

Theorem \ref{thm:main} says that $\rho^{\D_m}(C^m) \geq \frac{(\delta(C))^{m}}{2m^2}$.
Hence for constant $m\geq 3$ it holds that $\rho^{\D_{comp}}(C^m) > 0$ is a constant that depends only on $\delta(C)$ and $m$.

Now, let $\alpha_m = \rho^{\D_{comp}}(C^m)$ and note that the query complexity of $\D_{comp}$ is $n^2$. Let $M \in \F^{n^m}$ such that $\delta(M,C^m) = \delta$ and think of testing whether $M$ in $C^m$. We argue that $\pr{I \sim \D_{comp}}{M|_I \in C^m|_I} \geq \alpha_m$. This is true since otherwise, $\pr{I \sim \D_{comp}}{M|_I \notin C^m|_I} < \alpha_m$, and then the robustness $\rho^{\D_{comp}}(C^m) < \alpha_m \cdot 1$ (even assuming that whenever $M|_I \notin C^m|_I$ we have $\delta(M|_I, C^m|_I)=1$). Contradiction.

This proves that $C^m$ is a $(n^2,\alpha_m)$-strong LTC.
\end{proofsk}

Now we prove \clmref{clm:encod}.

\begin{proof}[Proof of Claim \ref{clm:encod}]
Let $k = \dim(C)$. Let $E_C$ be an encoder for the code $C$, which receives a message $x \in \F^k$ and outputs a codeword $E_C(x) \in C$ such that $C= \set{E_C(x) \ |\ x\in \F^k}$. Assume that $E_C$ has running time $T = O(k)$.
Note that this implies that $n \leq T = O(k)$ since the blocklength can not exceed the running time of the encoder.

For every $i \geq 1$ we define $E_{C^i}$ to be the encoder for $C^i$, i.e., $C^i = \set{E_{C^i}(x) \ |\ x\in \F^{k^i}}$. We will argue that the running time of $E_{C^i}$ is $i \cdot n^{i-1} \cdot T$.
Since $n \leq T = O(k)$ we will conclude that for any constant $i \geq 1$ the running time of $E_{C^i}$ is linear (in $k^i$).

We prove the claim by induction on $i$. The encoder $E_C = E_{C^1}$ was defined and its running time is $T = 1 \cdot n^{1-1} \cdot T$. Assume that we defined the encoder $E_{C^{i-1}}$ for the code $C^{i-1}$ and its running time is $(i-1) \cdot n^{(i-1)-1} \cdot T$.

Let us define the encoder $E_{C^i}$ for the code $C^i$. Note that the code $C^i$ has message length $k^i$ and its blocklength is $n^i$. Hence the message $x \in \F^{k^i}$ can be viewed as a matrix $k \times k^{i-1}$ and the encoder
$E_{C^i}$ will first encode (by the encoder $E_{C^{i-1}}$) every row of the matrix, obtaining the matrix $k \times n^{i-1}$. Then $E_{C^i}$ will encode every column of the obtained matrix to get a codeword of $C^i$.
The running time of the encoder $C^i$ is $k \cdot ((i-1) \cdot n^{i-2} T) + n^{i-1} T \leq  ((i-1) \cdot n^{i-1} T) + n^{i-1} T = i \cdot n^{i-1} \cdot T$.
\end{proof}

\subsection*{Acknowledgements}
We thank Eli Ben-Sasson and Or Meir for helpful discussions. The author thanks also Eli Ben-Sasson for many invaluable discussions regarding the ``robustness'' concept.

\bibliography{ImprovedTensors}

\begin{thebibliography}{10}

\bibitem{ALMSS98}
Sanjeev Arora, Carsten Lund, Rajeev Motwani, Madhu Sudan, and Mario Szegedy.
\newblock Proof verification and the hardness of approximation problems.
\newblock {\em Journal of the ACM}, 45(3):501--555, May 1998.

\bibitem{BGHSV04}
Eli Ben-Sasson, Oded Goldreich, Prahladh Harsha, Madhu Sudan, and Salil~P.
  Vadhan.
\newblock Robust {PCP}s of proximity, shorter {PCP}s, and applications to
  coding.
\newblock {\em SIAM Journal on Computing}, 36(4):889--974, 2006.

\bibitem{3CNF}
Eli Ben-Sasson, Prahladh Harsha, and Sofya Raskhodnikova.
\newblock {Some 3{CNF} Properties Are Hard to Test}.
\newblock {\em SIAM Journal on Computing}, 35(1):1--21, 2005.

\bibitem{BS05}
Eli Ben-Sasson and Madhu Sudan.
\newblock Simple {PCP}s with poly-log rate and query complexity.
\newblock In {\em STOC}, pages 266--275. ACM, 2005.

\bibitem{BS06}
Eli Ben-Sasson and Madhu Sudan.
\newblock Robust locally testable codes and products of codes.
\newblock {\em Random Struct. Algorithms}, 28(4):387--402, 2006.

\bibitem{BV09}
Eli Ben-Sasson and Michael Viderman.
\newblock {Composition of Semi-{LTC}s by Two-Wise Tensor Products}.
\newblock In Irit Dinur, Klaus Jansen, Joseph Naor, and Jos{\'e} D.~P. Rolim,
  editors, {\em APPROX-RANDOM}, volume 5687 of {\em Lecture Notes in Computer
  Science}, pages 378--391. Springer, 2009.

\bibitem{weaklysmooth}
Eli Ben-Sasson and Michael Viderman.
\newblock {Tensor Products of Weakly Smooth Codes are Robust}.
\newblock {\em Theory of Computing}, 5(1):239--255, 2009.

\bibitem{BV10}
Eli Ben-Sasson and Michael Viderman.
\newblock {Low Rate Is Insufficient for Local Testability}.
\newblock In Maria~J. Serna, Ronen Shaltiel, Klaus Jansen, and Jos{\'e} D.~P.
  Rolim, editors, {\em APPROX-RANDOM}, volume 6302 of {\em Lecture Notes in
  Computer Science}, pages 420--433. Springer, 2010.

\bibitem{nonrobusttwo}
Don Coppersmith and Atri Rudra.
\newblock {On the Robust Testability of Product of Codes}.
\newblock {\em Electronic Colloquium on Computational Complexity (ECCC)},
  (104), 2005.

\bibitem{D07}
Irit Dinur.
\newblock The {PCP} theorem by gap amplification.
\newblock {\em Journal of the ACM}, 54(3):12:1--12:44, June 2007.

\bibitem{DR06}
Irit Dinur and Omer Reingold.
\newblock {Assignment Testers: Towards a Combinatorial Proof of the {PCP}
  Theorem}.
\newblock {\em SIAM Journal on Computing}, 36(4):975--1024, 2006.

\bibitem{DSW06}
Irit Dinur, Madhu Sudan, and Avi Wigderson.
\newblock {Robust Local Testability of Tensor Products of {LDPC} Codes}.
\newblock In {\em APPROX-RANDOM}, volume 4110 of {\em Lecture Notes in Computer
  Science}, pages 304--315. Springer, 2006.

\bibitem{Goldreich}
Oded Goldreich.
\newblock Short locally testable codes and proofs (survey).
\newblock {\em Electronic Colloquium on Computational Complexity (ECCC)},
  (014), 2005.

\bibitem{nonrobustthree}
Oded Goldreich and Or~Meir.
\newblock {The Tensor Product of Two Good Codes Is Not Necessarily Robustly
  Testable}.
\newblock {\em Electronic Colloquium on Computational Complexity (ECCC)},
  14(062), 2007.

\bibitem{GopGR09}
Parikshit Gopalan, Venkatesan Guruswami, and Prasad Raghavendra.
\newblock List decoding tensor products and interleaved codes.
\newblock In Michael Mitzenmacher, editor, {\em STOC}, pages 13--22. ACM, 2009.

\bibitem{GI03}
Venkatesan Guruswami and Piotr Indyk.
\newblock Linear time encodable and list decodable codes.
\newblock In {\em STOC}, pages 126--135. ACM, 2003.

\bibitem{GI05}
Venkatesan Guruswami and Piotr Indyk.
\newblock Linear-time encodable/decodable codes with near-optimal rate.
\newblock {\em IEEE Transactions on Information Theory}, 51(10):3393--3400,
  2005.

\bibitem{KS07}
Tali Kaufman and Madhu Sudan.
\newblock Sparse random linear codes are locally decodable and testable.
\newblock In {\em FOCS}, pages 590--600. IEEE Computer Society, 2007.

\bibitem{KopSar10}
Swastik Kopparty and Shubhangi Saraf.
\newblock Local list-decoding and testing of random linear codes from high
  error.
\newblock In Leonard~J. Schulman, editor, {\em Proceedings of the 42nd {ACM}
  Symposium on Theory of Computing, {STOC} 2010, Cambridge, Massachusetts,
  {USA}, 5-8 June 2010}, pages 417--426. ACM, 2010.

\bibitem{KSY10}
Swastik Kopparty, Shubhangi Saraf, and Sergey Yekhanin.
\newblock High-rate codes with sublinear-time decoding.
\newblock In {\em ECCC - TR10-148}, 2010.

\bibitem{BLR93}
M.Blum, M.Luby, and R.Rubinfeld.
\newblock {Self-Testing/Correcting with Applications to Numerical Problems}.
\newblock {\em JCSS: Journal of Computer and System Sciences}, 47, 1993.

\bibitem{meir}
Or~Meir.
\newblock {Combinatorial Construction of Locally Testable Codes}.
\newblock {\em SIAM J. Comput}, 39(2):491--544, 2009.

\bibitem{Meir10}
Or~Meir.
\newblock {IP} = {PSPACE} using {Error Correcting Codes}.
\newblock {\em Electronic Colloquium on Computational Complexity (ECCC)},
  17:137, 2010.

\bibitem{Shamir92}
Adi Shamir.
\newblock {IP} = {PSPACE}.
\newblock {\em J. ACM}, 39(4):869--877, 1992.

\bibitem{Shen92}
A.~Shen.
\newblock {IP} = {PSPACE}: Simplified proof.
\newblock {\em J. ACM}, 39(4):878--880, 1992.

\bibitem{Spielman}
Daniel~A. Spielman.
\newblock {Linear-time Encodable and Decodable Error-Correcting Codes}.
\newblock {\em IEEE Transactions on Information Theory}, 42(6):1723--1731,
  1996.
\newblock Preliminary version appeared in STOC 1995.

\bibitem{Trevisan}
Luca Trevisan.
\newblock {Some Applications of Coding Theory in Computational Complexity}.
\newblock In {\em ECCC: Electronic Colloquium on Computational Complexity,
  technical reports}, 2004.

\bibitem{nonrobust}
Paul Valiant.
\newblock {The Tensor Product of Two Codes Is Not Necessarily Robustly
  Testable}.
\newblock In {\em APPROX-RANDOM}, volume 3624 of {\em Lecture Notes in Computer
  Science}, pages 472--481. Springer, 2005.

\end{thebibliography}
\bibliographystyle{plain}

\end{document}